\newtheorem{theorem}{Theorem}
\newtheorem{lemma}{Lemma}
\newtheorem{proposition}{Proposition}
\newtheorem{assumption}{Assumption}
\newtheorem{remark}{Remark}
\newtheorem{corollary}{Corollary}
\newcommand{\eps}{\epsilon}
\newcommand{\EE}[1]{\mathbb{E}\left[{#1}\right]}
\newcommand{\EEst}[2]{\mathbb{E}\left[{#1}\  \middle| \ {#2}\right]}
\newcommand{\Ep}[2]{\mathbb{E}_{{#1}}\left[{#2}\right]}
\newcommand{\Epe}[3]{\mathbb{E}_{{#1}}^{{#2}}\left[{#3}\right]}
\newcommand{\Epst}[3]{\mathbb{E}_{{#1}}\left[{#2}\  \middle| \ {#3}\right]}
\newcommand{\PP}[1]{\mathbb{P}\left\{{#1}\right\}}
\newcommand{\PPst}[2]{\mathbb{P}\left\{{#1}\  \middle| \ {#2}\right\}}
\newcommand{\Ppst}[3]{\mathbb{P}_{{#1}}\left\{{#2}\  \middle| \ {#3}\right\}}
\newcommand{\Pp}[2]{\mathbb{P}_{{#1}}\left\{{#2}\right\}}
\newcommand{\One}[1]{{\mathbbm{1}}\left\{{#1}\right\}}
\newcommand{\pt}{\frac{\partial}{\partial t}}
\newcommand{\X}{\mathcal{X}}
\newcommand{\Y}{\mathcal{Y}}
\newcommand{\R}{\mathbb{R}}
\newcommand{\N}{\mathbb{N}}
\newcommand{\indep}{\perp \!\!\! \perp}
\newcommand{\iidsim}{\stackrel{\textnormal{i.i.d.}}{\sim}}
\newcommand{\IF}{\mathrm{IF}}
\newcommand{\T}{\mathcal{T}}
\newcommand{\J}{\mathcal{J}}
\newcommand{\chb}{\widehat{C}_{\beta^*}}
\newcommand{\chh}{\widehat{C}_h}
\newcommand{\my}[1]{\textcolor{cyan}{[MY: #1]}}
\begin{document}
\title{Inference on Nonlinear Counterfactual Functionals under a Multiplicative IV Model}
\date{\today}

\author{
Yonghoon Lee\textsuperscript{1}\thanks{yhoony31@wharton.upenn.edu} \,
Mengxin Yu\textsuperscript{2}\thanks{myu@wustl.edu} \,
Jiewen Liu\textsuperscript{3}\thanks{jiewen.liu@pennmedicine.upenn.edu} \,
Chan Park\textsuperscript{4}\thanks{parkchan@illinois.edu} \\
Yunshu Zhang\textsuperscript{3}\thanks{yunshu.zhang@pennmedicine.upenn.edu} \,
James M. Robins\textsuperscript{5}\thanks{robins@hsph.harvard.edu} \,
Eric J. Tchetgen Tchetgen\textsuperscript{1,3}\thanks{
Address for correspondence: Eric J. Tchetgen Tchetgen, 407 Academic Research Building, 265 South 37th Street, Philadelphia, PA 19104. Email: ett@wharton.upenn.edu}
}

\date{
\vspace{1em}
\begin{minipage}{0.95\textwidth}
\centering
\small
\textsuperscript{1} Department of Statistics and Data Science, Wharton School, University of Pennsylvania \\
\textsuperscript{2} Department of Statistics and Data Science, Washington University in St. Louis \\
\textsuperscript{3} Department of Biostatistics, Perelman School of Medicine, University of Pennsylvania \\
\textsuperscript{4} Department of Statistics, University of Illinois Urbana-Champaign \\
\textsuperscript{5} Department of Biostatistics, Harvard T.H. Chan School of Public Health
\end{minipage}
}

\maketitle

\begin{abstract}

Instrumental variable (IV) methods play a central role in causal inference, particularly in settings where treatment assignment is confounded by unobserved variables. IV methods have been extensively developed in recent years and applied across diverse domains, from economics to epidemiology. In this work, we study the recently introduced multiplicative IV (MIV) model and demonstrate its utility for causal inference beyond the average treatment effect. In particular, we show that it enables identification and inference for a broad class of counterfactual functionals characterized by moment equations. This includes, for example, inference on quantile treatment effects. We develop methods for efficient and multiply robust estimation of such functionals, and provide inference procedures with asymptotic validity. Experimental results demonstrate that the proposed procedure performs well even with moderate sample sizes.
    
\end{abstract}


\section{Introduction}

In the study of causal effects, a fundamental challenge is dealing with unmeasured confounding. Traditional methods for addressing this issue typically involve randomizing treatment assignment or assuming that all confounders are observed. However, these approaches are often not feasible or realistic in practice. An alternative approach is to use a so-called instrumental variable (IV), which is apriori known to affect the outcome only through its effect on the treatment and not otherwise, and  is also known to be independent of all unmeasured confounders.

For example, consider a double blind placebo-controlled randomized trial where one observes data from a sample of patients, some of whom are randomly assigned to take a new drug against a certain disease outcome, e.g., anti-high blood pressure medication, or anti-obesity GLP-1-based drug regimen, versus a placebo control, with the outcome of interest defined as disease progression or resolution measured at the end of a fixed follow-up period. As common in many randomized experiments, a subset of participants may choose not to take their assigned treatment for unknown reasons, in which case unmeasured confounding of treatment uptake cannot be ruled out with certainty. In such a setting, the randomized treatment assignment is a well-known candidate as an instrumental variable. This is because it is determined randomly by design, and therefore independent of all confounding factors whether observed or unmeasured; and provided double-blinding is successful such that other protocol violations can be avoided, treatment assignment can only impact the outcome through treatment uptake. By leveraging instrumental variables in such settings, one can make inferences about the causal effect of treatment taken on the outcome, even in the presence of unmeasured confounding, potentially leading to more reliable and unbiased causal inferences.

While a valid instrument can in principle provide a valid test of a null causal effect, or partial identification of a causal effect, point identification typically requires imposing an additional condition.  For this reason, the causal inference literature across several disciplines has increasingly focused on learning with instrumental variables, by considering various estimands corresponding to different underlying additional instrumental variable assumption imposed for identification. These targets of inference include the average treatment effect (ATE)~\citep{wright1928tariff,goldberger1972structural}, and ~\citep{wang2018bounded} local average treatment effect (LATE)~\citep{angrist1996identification}, the average treatment effect on the treated (ATT)~\citep{robins1994correcting}, and others, under assumptions that range from parametric structural linear equations models, to nonparametric models that evoke certain monotonicity or homogeneity conditions.

In this work, we consider the so-called \emph{multiplicative instrumental variable model} (MIV) recently introduced by~\cite{liu2025multiplicativeinstrumentalvariablemodel}, which posits a proportional propensity score model for treatment as a function of the instrument and unmeasured confounders. While~\cite{liu2025multiplicativeinstrumentalvariablemodel} establish identification of the ATT under the MIV model, here we show that the MIV in fact provides identification and inference for a broader range of causal estimands among the treated---specifically, any functional of the counterfactual distribution which can be expressed as the unique solution to a corresponding moment equation. We present theoretical results for identification, semiparametric efficient estimation, and inference of the target functional, ensuring asymptotic validity.

\subsection{Notations}
We denote the real space by $\R$ and the space of positive real numbers as $\R^+$. We write $\N$ to denote the set of positive integers. For a sequence of random vectors $(X_n)$ and a sequence of positive real numbers $(\tau_n)$, we write $X_n = o_P(\tau_n)$ to denote that $\lim_{n \rightarrow \infty} \PP{\|\tfrac{X_n}{\tau_n}\|> \eps} = 0$ holds for any $\eps > 0$, i.e., $\|X_n/\tau_n\|$ converges to 0 in probability. For a random sequence $(X_n)$, $X_n \xrightarrow{D} P$ denotes that $X_n$ converges to $P$ in distribution. For random variables $X$ and $Y$, $F_X$ denotes the cumulative distribution function (cdf) of $X$, and $F_{X \mid Y}$ denotes the conditional cdf of $X$ given $Y$. For a positive integer $n$, $[n]$ denotes the set $\{1,2,\cdots,n\}$ and $(a_i)_{i \in [n]}$ denotes the vector $(a_1,\cdots,a_n)$. For $\alpha \in (0,1)$, $z_\alpha$ denotes the $(1-\alpha)$-quantile of the standard normal distribution.

\subsection{Problem setup}

We consider a standard setting of data with instrumental variable, where we observe  $O_1, O_2, \cdots, O_n \iidsim P$, where $O_i = (X_i,Z_i,A_i,Y_i)$. Here, $X \in \X$ denotes baseline covariates, $Z \in \{0,1\}$ is the instrumental variable, $A \in \{0,1\}$ is the treatment, and $Y \in \Y \subset \R$ is the outcome. We denote the vector of unmeasured confounders by $U \in \mathcal{U}$, and write $Y^{a=0}$ and $Y^{a=1}$ to denote the counterfactual outcomes when setting $A=0$ and $A=1$, respectively. We suppose that the following core IV assumptions hold. 

\begin{assumption}\label{asm:iv}
    The random variables $(X,Z,A,Y,U)$ satisfy
    \begin{enumerate}
        \item $Y=Y^{a=0}(1-A) + Y^{a=1} A$ almost surely, (consistency)
        \item $(A,Z) \indep Y^{a=0} \mid X, U$, (weak ignorability and exclusion-restriction)
        \item $U \indep Z \mid X$. (IV independence)
    \end{enumerate}
\end{assumption}
Note that when one can reasonably conceive of an intervention on the instrument, the second condition in Assumption~\ref{asm:iv} implies the counterfactual condition, i.e., $Y^{a=0,z=0} = Y^{a=0,z=1}$, where $Y^{a=0,z=0}$ and $Y^{a=0,z=1}$ are well-defined counterfactual outcomes such that by a consistency condition, $Y^{a=0} = Y^{a=0,z=0} (1 - Z) + Y^{a=0,z=1} Z$ almost surely. Importantly, note also that the second condition does not impose any restriction on $Y^{a=1}$; in particular, it allows for the presence of a direct effect of $Z$ on $Y^{a=1}$.
The corresponding directed acyclic graph (DAG) and Single World Intervention Graph (SWIG) are shown in Figure~\ref{fig:dag}.

\begin{figure}[ht]
    \centering
    \includegraphics[width=0.85\textwidth]{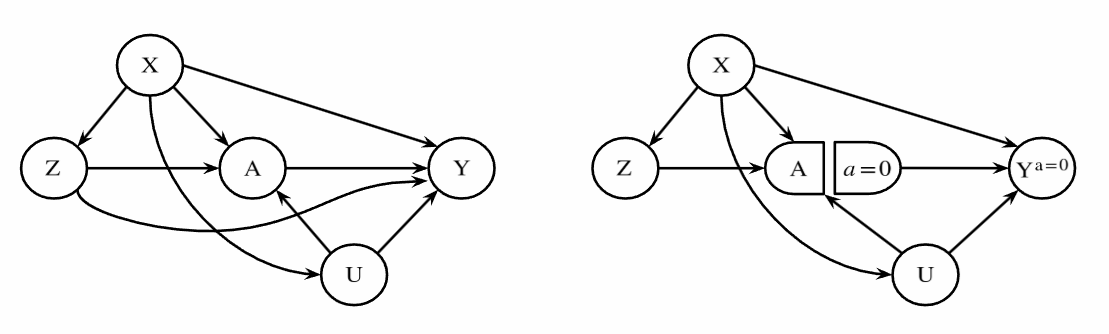}
    \caption{Causal diagram for the IV model.}
    \label{fig:dag}
\end{figure}

The task is to identify and estimate a functional of $P_{Y^{a=0} \mid A=1}$, the distribution of the counterfactual on the treated. Specifically, we consider any parameter $\beta^* \in \mathcal{B} \subset \R^d$ that can be represented as a solution of a moment equation
\begin{equation}\label{eqn:moment_equation}
    \EEst{M(Y^{a=0},\beta)}{A=1} = 0,
\end{equation}
where $M : \Y \times \mathcal{B} \rightarrow \R^d$ is a measureable moment function. Potential targets include:

\paragraph{Average treatment effect on the treated.}
Suppose we are interested in the average treatment effect on the treated (ATT), defined as  
\[\text{ATT} = \EEst{Y^{a=1} - Y^{a=0}}{A=1}.\]
Since learning $\EEst{Y^{a=1}}{A=1} = \EEst{Y}{A=1}$ is a trivial task, the main challenge is to estimate and infer $\EEst{Y^{a=0}}{A=1}$. Observe that this `mean of the counterfactual for the treated' can be represented as the solution to the moment equation~\eqref{eqn:moment_equation}, with $M(Y^{a=0}, \beta) = Y^{a=0} - \beta$.

\paragraph{Quantile treatment effect on the treated.}
Consider a setting where the target is the quantile treatment effect on the treated (QTT), given by
\begin{equation}\label{eqn:QTT}
    \text{QTT} = F_{Y^{a=1} \mid A=1}^{-1}(q) - F_{Y^{a=0} \mid A=1}^{-1}(q),
\end{equation}
where $q \in (0,1)$ is a predetermined target quantile. Here, to facilitate the exposition, we are assuming a setting where the counterfactual distributions are continuous, so that the inverse of the cdfs are well-defined, at least in the local neighborhood of the true quantile, although this can readily be relaxed. Again, the main task is to learn $F_{Y^{a=0} \mid A=1}^{-1}(q)$, which is the solution to the moment equation~\eqref{eqn:moment_equation}, with $M(Y^{a=0}, \beta) = \One{Y^{a=0} \geq \beta} - q$.

\paragraph{Cumulative distribution function of the counterfactual on the treated.}
Suppose the target is $\PPst{Y^{a=0} \leq y}{A=1}$ for a predetermined $y \in \R$. This corresponds to the moment equation~\eqref{eqn:moment_equation}, with $M(Y^{a=0},\beta) = \One{Y^{a=0} \leq y} - \beta$.\\

In this work, we examine how to learn such functionals of the counterfactual distribution within the context of the multiplicative IV setting, addressing the process from identification to efficient estimation and inference. 

\subsection{Main contributions}
Our contributions can be summarized as follows:

\paragraph{Identification and inference on general moment functions in the multiplicative IV model.}

We establish how the multiplicative IV model facilitates the identification of general counterfactual moment functions. We then introduce an efficient influence function-based estimator and demonstrate its multiple robustness property. We establish its asymptotic normality, which is then used as basis for robust inference about the mean of a general counterfactual moment function. 

\paragraph{Inference on the functional of the counterfactual distribution.}

We develop a framework for inference about a target functional of interest of the counterfactual distribution, defined implicitly as the solution to a counterfactual moment equation, which directly leverages the generality of the counterfactual moment functions-based approach we propose. Using the familiar statistical approach of inverting a test statistic, we produce confidence intervals for a large class of target functionals with a provable asymptotic coverage guarantee.

\paragraph{Empirical evaluation.}

We assess the quality of the estimator and the confidence interval through both simulations and an application to the Job Corps Study. The results indicate that the confidence interval achieves the coverage guarantee tightly, even in relatively small sample sizes.

\subsection{Related work}

The instrumental variables (IV) approach has a long-standing presence in econometrics, with its origins in the foundational works of \citet{wright1928tariff} and \citet{goldberger1972structural}, who first introduced the method within the framework of linear structural equations modeling. These pioneering contributions established a critical foundation for using instrumental variables to estimate causal effects. 
Subsequently, grounded in the potential outcome language of causality, \cite{angrist1995identification} and~\cite{angrist1996identification} established conditions for identification of so-called local average treatment effect; while~\cite{robins1994correcting} established sufficient conditions for identifying the average treatment effect on treatment on the treated, also see~\cite{hernan2006estimating}. More recently, in a separate strand of the IV literature, \cite{wang2018bounded} provide an alternative sufficient set of assumptions for identifying the population average treatment effect; these conditions were subsequently elaborated by~\cite{cui2021semiparametric}, and~\cite{qiu2021optimal}.  A recent overview of these modern IV developments and the corresponding semiparametric efficiency theory can be found in \citet{levis2024nonparametric}.

The statistical methods for making inferences about the target functionals in this paper are grounded in this semiparametric tradition, going back to  \citet{bickel1993efficient, tsiatis2006semiparametric}.  \cite{newey1990semiparametric} and~\cite{van1991differentiable}. 
Our specific implementation of semiparametric estimation for general nonparametric functionals follows closely from 
\citet{chernozhukov2018double} who propose a cross-fit scheme which combined with carefully constructed moment functions (mainly efficient influence functions) known to have a certain Neyman orthogonality property, accommodates the free-range use of modern machine methods for estimating nuisance functions, while minimizing their their downstream impact on inferences about the counterfactual target functionals. 

Several related papers study nonlinear treatment effects within the instrumental variable (IV) model. For instance, \citet{chernozhukov2017instrumental} introduced the Instrumental Variable Quantile Regression (IVQR) model to identify quantile treatment effects (QTEs). They achieve this by reinterpreting a key assumption—placing restrictions on the distribution of potential outcomes of the endogenous variable, assuming these distributions are identical—thus relaxing the rank invariance condition discussed in \citet{chernozhukov2005iv}. Additionally, \citet{abadie2002instrumental} examined a similar problem by identifying QTEs for the subpopulation of compliers. This approach accommodates essentially unrestricted heterogeneity in treatment effects, as it imposes no assumptions on the behavior of rank variables across potential treatment states. To ensure identification, they restrict attention to binary endogenous treatment settings and assume monotonicity in the relationship between the instrument and the treatment, in the tradition of local average treatment effects of Angrist and Imbens. Furthermore, \cite{chernozhukov2007instrumental} studied quantile treatment effects in nonseparable models by imposing a monotonicity assumption on the structural function with respect to the unobserved confounder $U$ to identify the functionals of interest.


For other nonlinear treatment effects; some recent works can be viewed as generalizations of \citet{wang2018bounded}. The first is \citet{michael2024instrumental} who considered identification in the IV framework with nonseparable structural  outcome model, by generalizing the assumption that the unmeasured confounder does not modify the additive effect of the instrument on treatment to accommodate time-varying treatment and instruments in a longitudinal setting. 
Additionally, \citet{michael2024instrumental} and \citet{mao2022identification} extend the ATE identification results of \citet{wang2018bounded} to the entire outcome distribution, allowing for robust identification of a broad class of nonlinear counterfactual functionals over the entire population, including both treated and untreated units (also see the technical report~\citet{tchetgen2018marginal}). In contrast, the current paper considers distinct identification conditions in a point instrument and treatment setting,  focusing primarily on identifying nonlinear counterfactual functionals for the treated group only---e.g., the median of the treatment-free counterfactual outcome of the treated. It is worth mentioning that our approach also differs from \cite{liu2020identification} who consider an IV framework in which any potential treatment effect on the treated, including nonlinear treatment effect such as quantile treatment effects, can be identified. Their identification strategy hinges upon a key assumption that the extended propensity score, $\PPst{A}{Y^{a=0}, Z, X}$, does not exhibit an interaction between $Y^{a=0}$ and $Z$ on the logit scale, a fundamentally different assumption to our MIV model.  

Several other papers have also investigated nonlinear treatment effects within the IV framework using structural (nested) mean models. For instance, \cite{vansteelandt2003causal} proposed generalized structural mean models to estimate causal effects when the outcome model follows a generalized linear model. Other notable contributions include \cite{robins2004estimation}, \cite{tan2010marginal}, and \cite{matsouaka2017instrumental}, among others. In addition, a line of research has focused on nonlinear effects in IV settings with survival-type outcomes; see~\cite{ying2023structural}, \cite{picciotto2012structural}, \cite{martinussen2019instrumental}, and \cite{wang2023instrumental} for further details.


Our work is closely related to and builds on the recent work of~\citet{liu2025multiplicativeinstrumentalvariablemodel}, who introduce the multiplicative IV model and discuss inference on the average treatment effect on the treated. We extend their identification results to general counterfactual functionals by developing a mapping from full-data moment equations whose solution implicitly defines the targeted counterfactual functional, to corresponding identifying observed-data moment equations through the construction of an appropriate set of IV-based weights. Crucially, our framework accommodates nonlinear estimand  that generally may not be available in closed form. To address this challenge for inference, we adopt a familiar statistical  test inversion strategy to construct confidence intervals for the functionals of interest.

An important  application that our proposed method covers is the quantile treatment effect, which has been previously studied by~\citet{kallus2024localized}, who propose a method for quantile causal effects (QTE) under unconfoundedness. They also discuss the IV setting under the monotonicity assumption, providing a method for QTE conditional on compliers. In contrast, our method provides inference on the \textit{quantile treatment effect on the treated} (QTT), under relatively mild assumptions.

\section{Main results}

We present the main result in the following structure. In Section~\ref{sec:moment_id}, we introduce the multiplicative IV model and discuss the identification of general moment equations. In Section~\ref{sec:moment_est}, we derive an efficient influence function-based estimator for the moment equation and discuss inference with asymptotic validity. Section~\ref{sec:functional} then discusses the inverse-inference procedure, which leads to inference on the target functional.

\subsection{Multiplicative IV model}\label{sec:moment_id}
For identification, we additionally assume the following multiplicative IV model.

\begin{assumption}[Multiplicative IV assumption]\label{asm:multi_iv}
    There exist functions $g_1 : \{0,1\} \times \X \rightarrow [0,1]$ and $g_2 : \mathcal{U} \times \X \rightarrow [0,1]$ such that the following holds.
    \[\PPst{A=1}{Z,U,X} = g_1(Z,X) \cdot g_2(U,X),\]
    with the only restriction $g_1(0,X)=1$ almost surely, to ensure uniqueness.
\end{assumption}
Assumption~\ref{asm:multi_iv} states that the instrumental variable $Z$ and the unmeasured confounder $U$ affect the treatment in a multiplicative manner. 

As noted earlier, different identification assumptions have been considered in the prior literature on instrumental variable methods, and Table~\ref{tab:IV_models} provides an overview and comparison with our setting---extending the one in~\citet{liu2025multiplicativeinstrumentalvariablemodel}. In Table~\ref{tab:IV_models}, we provide a generalized latent index model (GLIM) representation for each IV model, which is generally written in the form $A^z = \One{h(z,U) \geq \eps_z}$, $z=0,1$ (Here, $A^0$ and $A^1$ refer to the potential treatment values $A^{z=0}$ and $A^{z=1}$, respectively, under hypothetical interventions on the instrument). \citet{angrist1995identification} study the monotonicity assumption $A^{z=1} \geq A^{z=0}$---i.e., there are no defiers with ($A^{z=1} = 0$ and $A^{z=0} = 1$)---and establish that the average treatment effect on the compliers, $\EEst{Y^{a=1}-Y^{a=0}}{A^{z=1} = 1, A^{z=0} = 0}$, can be identified under this condition by a standard Wald ratio estimand; their result also implies identification of $\EEst{Y^{a}}{A^{z=1} = 1, A^{z=0} = 0}$ and in fact of any well-defined functional of the counterfactual distribution among the compliers, see for instance~\citet{abadie2003semiparametric}. A result due to~\citet{vytlacil2002independence} further shows that the monotonicity condition can equivalently be expressed as a specific GLIM,  $A^z = \One{g(z) + U \geq 1}$ with $g(1) \geq g(0)$ and $U \sim \textnormal{Unif}([0,1])$. \citep{wang2018bounded} consider a condition that includes the model $A^z = \One{g(z) + U \geq \eps_z}$ (with $\eps_z \sim \textnormal{Unif}([0,1])$) as a special case. They show that the average treatment effect---more generally, the conditional average treatment effect $\EEst{Y^{a=1} - Y^{a=0}}{X=x}$---can be identified under this condition. \citet{cui2021semiparametric} further show that the condition identifies any smooth nonlinear functional of the distribution of $Y^{a}$ in the population. \citet{tchetgen2024nudge} study inference on the nudge average treatment effect (NATE), defined as $\EEst{Y^{a=0}}{N=1}$, where $N = \One{A^{z=1} \neq A^{z=0}}$. For the identification of NATE, they consider a logistic model specified as
$A^z = \One{g(z) + U \geq \epsilon_z}$,
where $\epsilon_0, \epsilon_1 \iidsim \textnormal{Logistic}(0,1)$.

The multiplicative IV model, instead, includes the model $A^z = \One{g(z) \cdot U \geq \eps_z}$ (with $\eps_z \sim \textnormal{Unif}([0,1])$), and \cite{liu2025multiplicativeinstrumentalvariablemodel} shows that the average treatment effect on the treated, $\EEst{Y^{a=0}}{A=1}$, can be identified under this model, while the current work gives the corresponding identification resulf for any smooth functional of the treatment-free counterfactual in the treated. These different assumptions do not have a strict hierarchical relationship but instead offer distinct advantages depending on the context that may be helpful to communicate difference in identification conditions using a common framing. 

\begin{table}[ht]
\renewcommand{\arraystretch}{1.5}
\centering
\begin{tabular}{|c|c|c|}
\hline
 Identification assumption & Generalized latent index model &  Target of inference \\
\hline
\hline
 Monotonicity & \makecell{{}\\[-2mm]$A^z = \One{g(z) + U \geq \epsilon_z}$\\ $g(0), g(1) \in [0,1]$, $g(1) \geq g(0)$,\\
 $U \sim \textnormal{Unif}([0,1])$, $\eps_0 = \eps_1 = 1$. \vspace{2mm}} & \makecell{$\EEst{Y^{a=0}}{A^{z=1} = 1, A^{z=0} = 0}$. \\ \citep{angrist1995identification, vytlacil2002independence}}  \\
\hline
 Additive IV model & \makecell{{}\\[-2mm]$A^z = \One{g(z) + U \geq \epsilon_z}$, \\ For $z=0,1$, $\eps_z \sim \textnormal{Unif}([0,1])$, and \\ $g(z) + U \in [0,1]$ almost surely. \vspace{2mm}} & \makecell{$\EE{Y^{a=0}}$. \\ \citep{wang2018bounded}} \\
 \hline
 Logistic IV model & \makecell{{}\\[-2mm]$A^z = \One{g(z) + U \geq \epsilon_z}$, \\ $\eps_0, \eps_1 \iidsim \textnormal{Logistic}(0,1)$. \vspace{2mm}} & \makecell{$\EEst{Y^{a=0}}{N=1}$, \\ where $N = \One{A^{z=1} \neq A^{z=0}}$.\\ \citep{tchetgen2024nudge}} \\
\hline
 Multiplicative IV model & \makecell{{}\\[-2mm]$A^z = \One{g(z) \cdot U \geq \epsilon_z}$,\\ For $z=0,1$, $\eps_z \sim \textnormal{Unif}([0,1])$, and \\ $g(z) \cdot U \in [0,1]$ almost surely. \vspace{2mm}} & \makecell{$\EEst{Y^{a=0}}{A=1}$~\citep{liu2025multiplicativeinstrumentalvariablemodel}, \\[1mm] $\gamma(P_{Y^{a=0} \mid A=1})$, where $\gamma$ is a functional.
 \\ This manuscript} \\
\hline
\end{tabular}
\caption{Summary of various IV models, their corresponding generalized latent index models (GLIM), and possible targets of inference. For the monotonicity assumption and the Logistic IV model, the corresponding GLIM provides an equivalent representation, whereas for the Additive and Multiplicative IV models, the GLIM serves as an example. Note that the measured confounder $X$ is omitted in this summary---when it exists, we can regard it as being implicitly conditioned.}
\label{tab:IV_models}
\end{table}

We now show that the multiplicative IV model, in fact, enables identification for much broader targets. First, we derive the following result for identifying the moment equation~\eqref{eqn:moment_equation}.

\begin{theorem}\label{thm:identification}
Suppose Assumptions~\ref{asm:iv} and~\ref{asm:multi_iv} hold. Then
\[\EEst{M(Y^{a=0},\beta)}{A=1} = -\EEst{\frac{\delta^{M,A}(\beta,X)}{\delta^A(X)}}{A=1} = -\EE{\frac{\delta^{M,A}(\beta,X)}{\delta^A(X)} \cdot \frac{\PPst{A=1}{X}}{\PP{A=1}}},\]
where
\begin{equation}\label{eqn:nui_delta}
\begin{split}
    \delta^{M,A}(\beta,X) &= \EEst{M(Y,\beta) (1-A)}{Z=1,X} - \EEst{M(Y,\beta) (1-A)}{Z=0,X},\\
    \delta^A(X) &= \PPst{A=1}{Z=1,X} - \PPst{A=1}{Z=0,X}.
\end{split}
\end{equation}
    
\end{theorem}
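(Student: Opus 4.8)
The plan is to work backwards from the counterfactual moment $\EEst{M(Y^{a=0},\beta)}{A=1}$ by exploiting the structure of the multiplicative propensity score to re-express it in terms of observed-data quantities. First I would write $\EEst{M(Y^{a=0},\beta)}{A=1} = \EE{M(Y^{a=0},\beta) A} / \PP{A=1}$, so it suffices to identify the numerator $\EE{M(Y^{a=0},\beta) A}$. The key algebraic observation under Assumption~\ref{asm:multi_iv} is that $\PPst{A=1}{Z,U,X} = g_1(Z,X) g_2(U,X)$ with $g_1(0,X)=1$, so $\PPst{A=1}{Z=1,U,X} - \PPst{A=1}{Z=0,U,X} = (g_1(1,X)-1) g_2(U,X)$, while $\PPst{A=0}{Z=0,U,X} = 1 - g_2(U,X)$ and $\PPst{A=1}{Z=0,U,X} = g_2(U,X)$. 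The idea is that differencing across $Z$ isolates the multiplicative factor $g_2(U,X)$, which is precisely the piece tied to the unmeasured confounder and hence to $Y^{a=0}$.

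Next I would compute $\delta^{M,A}(\beta,X)$ by conditioning further on $U$. Using consistency, $M(Y,\beta)(1-A) = M(Y^{a=0},\beta)(1-A)$, and by Assumption~\ref{asm:iv}(2) (weak ignorability / exclusion restriction) $Y^{a=0} \indep (A,Z) \mid X,U$, so
\begin{align*}
\EEst{M(Y,\beta)(1-A)}{Z=z,X}
&= \EEst{\EEst{M(Y^{a=0},\beta)}{X,U}\,\PPst{A=0}{Z=z,U,X}}{Z=z,X}\\
&= \EEst{\mu^M(\beta,X,U)\,(1 - g_1(z,X) g_2(U,X))}{X},
\end{align*}
where $\mu^M(\beta,X,U) := \EEst{M(Y^{a=0},\beta)}{X,U}$ and the last step uses $U \indep Z \mid X$ (Assumption~\ref{asm:iv}(3)) to drop the conditioning on $Z=z$ in the outer expectation. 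Subtracting the $z=0$ version from the $z=1$ version kills the $\mu^M$ term that is not multiplied by $g_1$, leaving $\delta^{M,A}(\beta,X) = -(g_1(1,X)-1)\,\EEst{\mu^M(\beta,X,U) g_2(U,X)}{X}$. The identical computation with $M \equiv 1$ (i.e. for the treatment indicator) gives $\delta^A(X) = (g_1(1,X)-1)\,\EEst{g_2(U,X)}{X}$. Meanwhile, $\EEst{M(Y^{a=0},\beta) A}{X} = \EEst{\mu^M(\beta,X,U) \PPst{A=1}{U,X}}{X} = \EEst{\mu^M(\beta,X,U) g_2(U,X)}{X}$ since $g_1(0,X)=1$ appears only through $\PPst{A=1}{Z=0,U,X}=g_2(U,X)$ — wait, more carefully, $\PPst{A=1}{U,X}=\EEst{g_1(Z,X)}{X}g_2(U,X)$, so I would instead note that $\EEst{\mu^M g_2(U,X)}{X} = -\delta^{M,A}(\beta,X)/(g_1(1,X)-1)$ and $\EEst{g_2(U,X)}{X} = \delta^A(X)/(g_1(1,X)-1)$, hence the ratio $\EEst{\mu^M g_2}{X}\big/\EEst{g_2}{X} = -\delta^{M,A}(\beta,X)/\delta^A(X)$, which is exactly $\EEst{M(Y^{a=0},\beta)}{A=1,X}$ after one checks that $\EEst{M(Y^{a=0},\beta)}{A=1,X} = \EEst{\mu^M g_2}{X}/\EEst{g_2}{X}$ via $\PPst{A=1}{U,X} \propto g_2(U,X)$ as a function of $U$ given $X$ (the $\EEst{g_1(Z,X)}{X}$ factor cancels in numerator and denominator). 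Finally I would average over $X$ given $A=1$, using $\PPst{X}{A=1} = \PPst{A=1}{X}\PP{X}/\PP{A=1}$, to obtain both displayed forms.

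The main obstacle I anticipate is bookkeeping the iterated conditioning correctly — in particular justifying that the outer expectation over $U$ can be taken conditional on $X$ alone rather than on $(Z=z,X)$, which is where Assumption~\ref{asm:iv}(3) is essential, and ensuring the cancellation of the common $\EEst{g_1(Z,X)}{X}$ factor when passing from the $U$-level representation of $\EEst{M(Y^{a=0},\beta)}{A=1,X}$ to the $\delta$-ratio. One also needs the implicit regularity condition that $\delta^A(X) \neq 0$ almost surely (equivalently $g_1(1,X)\neq 1$ and $g_2(U,X)$ nondegenerate), i.e. a relevance/nonzero-first-stage condition, for the ratio to be well defined; I would flag this where it is used. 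The rest is a routine application of the law of iterated expectations and Assumption~\ref{asm:iv}(1)–(2).
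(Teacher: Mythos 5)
Your proof is correct, and its skeleton matches the paper's: both difference the observed regressions over $Z$ to isolate the $U$-dependent factor of the propensity score, and both then recognize the resulting ratio as a $g_2$-weighted average of $\EEst{M(Y^{a=0},\beta)}{U,X}$ that coincides with $\EEst{M(Y^{a=0},\beta)}{A=1,X}$. Where you diverge is in the middle step. The paper never substitutes the factorization $g_1g_2$ into the main computation; instead it works with the ratio $\PPst{A=1}{Z=1,U,X}/\PPst{A=1}{Z=0,U,X}$, shows via Assumption~\ref{asm:multi_iv} that this ratio is free of $U$ and equals its observed-data counterpart, and then proves the auxiliary conditional independence $U \indep Z \mid A=1, X$ in order to rewrite $\EEst{\EEst{M(Y^{a=0},\beta)}{U,X}\,\PPst{A=1}{Z=0,U,X}}{X}$ as $\EEst{M(Y^{a=0},\beta)}{A=1,X}\cdot\PPst{A=1}{Z=0,X}$. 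You bypass that lemma entirely by computing $\delta^{M,A}$ and $\delta^A$ in closed form as $\mp(g_1(1,X)-1)$ times $g_2$-weighted averages, and by observing that $\PPst{A=1}{U,X}\propto g_2(U,X)$ in $U$ (the factor $\EEst{g_1(Z,X)}{X}$ cancelling between numerator and denominator), so that $\EEst{\mu^M g_2}{X}/\EEst{g_2}{X}=\EEst{M(Y^{a=0},\beta)}{A=1,X}$ is immediate. Your route is shorter and arguably more transparent about where the multiplicative structure enters; the paper's route makes explicit the reusable facts (invariance of the propensity ratio in $U$, and $U\indep Z\mid A=1,X$) that hold under any model with a $U$-free propensity ratio, not just the specific factorization. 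Your flagging of the implicit relevance condition $\delta^A(X)\neq 0$ (equivalently $g_1(1,X)\neq 1$ and $\EEst{g_2(U,X)}{X}>0$) is a point the paper leaves tacit and is worth stating.
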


By Theorem~\ref{thm:identification}, the target functional $\beta^*$ can be represented as the solution of
\begin{equation}\label{eqn:hbp}
    h(\beta,P) := \Ep{P}{\frac{\delta^{M,A}(\beta,X)}{\delta^A(X)} \cdot \PPst{A=1}{X}} = 0,
\end{equation}
under the multiplicative IV model. 

Theorem~\ref{thm:identification} illustrates the effectiveness of the multiplicative IV model. Specifically, it states that having a multiplicative IV allows for the identification of any function of the counterfactual outcome, as long as it is measurable. Consequently, this generality facilitates `inverse inference'---inference on the solution to the moment equation, which we will discuss in further detail in Section~\ref{sec:functional}.

\subsection{Efficient estimation and inference on the moment equation}\label{sec:moment_est}

Now, we construct an efficient estimator for the value of $h(\beta,P)$, as defined in~\eqref{eqn:hbp} under a nonparametric model for the observed data distribution. For the discussion in this section, we assume that $\beta$ is a pre-specified fixed parameter.

We first derive the efficient influence function of $h(\beta,P)$. Let us define the nuisance functions $\pi_z : \X \rightarrow [0,1], \mu_z : \R^d \times \X \rightarrow \R^d, \lambda_z : \X \rightarrow [0,1]$ as
\[\pi_z(X) = \PPst{Z=z}{X}, \mu_z(\beta,X) = \EEst{M( Y,\beta) (1-A)}{Z=z,X} \text{ and } \lambda_z(X) = \PPst{A=1}{Z=z,X},\]
for $z=0,1$, and $\rho(X) = \PPst{A=1}{X}$. Theorem~\ref{thm:eif} provides the efficient influence function for the moment equation, along with the formula for the second-order remainder term. For conciseness, we omit the inputs $X$ and $\beta$ for the nuisance functions in the expression of the remainder term.

\begin{theorem}\label{thm:eif}
    The efficient influence function of $h(\beta,P)$ in~\eqref{eqn:hbp} is given by
    \begin{multline}\label{eqn:eif_moment}
    \dot{h}(\beta,O,P) = (\delta(\beta,X)\cdot A - h(\beta,P)) + \frac{\rho(X)}{\delta^A(X)}\cdot\frac{2Z-1}{\pi_Z(X)}\\
    \cdot\Big(M( Y,\beta) (1-A) - \mu_Z(\beta,X) - (A - \lambda_Z(X))\cdot\delta(\beta,X)\Big),
\end{multline}
where $\delta(\beta,X) = \delta^{M,A}(\beta,X)/\delta^A(X)$. Moreover, for any distribution $\bar{P}$, the second-order remainder term
\begin{align*}
    R(\bar{P},P) &= h(\beta,\bar{P}) - h(\beta,P) + \Ep{P}{\dot{h}(\beta,O,\bar{P})}
\end{align*}
is equal to
\begin{multline*}
    \mathbb{E}_{P}\bigg[(\bar{\delta} - \delta)\left(\frac{\bar{\rho}}{\bar{\delta}^A}(\bar{\delta}^A - \delta^A) - (\bar{\rho}-\rho)\right)\\
    + \frac{\bar{\rho}}{\bar{\delta}^A} (\bar{\pi}_1-\pi_1)\left(\frac{(\bar{\mu}_1-\mu_1) - \bar{\delta}(\bar{\lambda}_1 - \lambda_1)}{\bar{\pi}_1} + \frac{(\bar{\mu}_0-\mu_0) - \bar{\delta}(\bar{\lambda}_0 - \lambda_0)}{1-\bar{\pi}_1}\right)\bigg],
\end{multline*}
    where $\bar{\mu}_z, \bar{\lambda}_z, \bar{\pi}_z$ and  $\bar{\rho}$ denote the nuisance functions corresponding to $\bar{P}$, and we write $\bar{\delta}^A = \bar{\lambda}_1 - \bar{\lambda}_0$ and $\bar{\delta} = \frac{\bar{\mu}_1 - \bar{\mu}_0}{\bar{\lambda}_1 - \bar{\lambda}_0}$.
\end{theorem}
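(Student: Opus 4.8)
The plan is to derive the efficient influence function (EIF) of the target functional $h(\beta,P)$ in~\eqref{eqn:hbp} by the standard recipe: write $h(\beta,P)$ as an expectation of a function involving nuisance parameters that are themselves conditional expectations, perturb $P$ along a one-dimensional parametric submodel $\{P_t\}$ with score $s(O)$, differentiate $h(\beta,P_t)$ at $t=0$, and express the resulting pathwise derivative as a covariance $\Ep{P}{\dot h(\beta,O,P)\,s(O)}$. The candidate EIF will then be read off, and one verifies it has mean zero and lies in the tangent space (here the full nonparametric tangent space, so the only requirement is mean zero, which is immediate from the construction). Concretely, I would decompose $h(\beta,P)=\Ep{P}{\delta(\beta,X)\rho(X)}$ where $\delta=\delta^{M,A}/\delta^A$, and note that this depends on $P$ through (i) the marginal law of $X$, and (ii) the conditional functionals $\mu_z,\lambda_z$ (hence $\delta^{M,A},\delta^A,\delta$) and $\rho$. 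The contribution from perturbing the law of $X$ gives the term $\delta(\beta,X)\rho(X)-h(\beta,P)$, which can be repackaged using $\rho(X)=\pi_1(X)\lambda_1(X)+(1-\pi_1(X))\lambda_0(X)$ and $\EEst{A}{Z,X}=\lambda_Z(X)$ into $\delta(\beta,X)A - h(\beta,P)$ plus a correction absorbed into the conditional-score part. The contributions from perturbing the conditional laws of $Y,A$ given $Z,X$ produce the inverse-probability-weighted residual terms: differentiating $\mu_z$ yields $\frac{2Z-1}{\pi_Z}(M(Y,\beta)(1-A)-\mu_Z)$-type scores, differentiating $\lambda_z$ yields $-\frac{2Z-1}{\pi_Z}(A-\lambda_Z)\delta$-type scores, all multiplied by the "outer" Jacobian factor $\rho/\delta^A$ coming from $\partial h/\partial\delta^{M,A}$ and $\partial h/\partial\delta^A$; the perturbation of $\pi_1$ contributes nothing net because $\delta$ is a ratio and $\rho$ is a convex combination, so the $\pi_1$-score loads onto something orthogonal — this cancellation is exactly why the final EIF has the clean form~\eqref{eqn:eif_moment}. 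Assembling these pieces gives the stated $\dot h(\beta,O,P)$.

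For the remainder term, the approach is a direct "plug-in minus truth plus correction" expansion: write $R(\bar P,P)=h(\beta,\bar P)-h(\beta,P)+\Ep{P}{\dot h(\beta,O,\bar P)}$, substitute the definitions of $h$ and $\dot h$ evaluated at $\bar P$, and take the expectation under $P$. The key manipulation is to use the tower property to replace $A$, $M(Y,\beta)(1-A)$, etc. by their $P$-conditional expectations given $(Z,X)$ — i.e. $\EEst{A}{Z,X}=\lambda_Z$, $\EEst{M(Y,\beta)(1-A)}{Z,X}=\mu_Z$ — and then average over $Z\mid X$ using $\PPst{Z=1}{X}=\pi_1$ versus the weights $\bar\pi_Z$ appearing in $\dot h(\beta,O,\bar P)$; this is where the factors $(\bar\pi_1-\pi_1)$, $\frac{1}{\bar\pi_1}$ and $\frac{1}{1-\bar\pi_1}$ in the second line of the claimed remainder come from. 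After these substitutions everything becomes a deterministic function of $X$ integrated against $P_X$, and one regroups the terms into the two products shown: the first, $(\bar\delta-\delta)\big(\frac{\bar\rho}{\bar\delta^A}(\bar\delta^A-\delta^A)-(\bar\rho-\rho)\big)$, collects the $\delta$- and $\rho$-mismatch, while the second collects the $\pi_1$-mismatch interacted with the $\mu_z$- and $\lambda_z$-mismatch. Checking that no first-order (non-product) terms survive — i.e. that the linear-in-error pieces cancel — is precisely the statement that $\dot h$ is Neyman orthogonal, and it is the natural internal consistency check on the computation.

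The main obstacle, and the step demanding the most care, is the bookkeeping around the ratio $\delta=\delta^{M,A}/\delta^A$ and the reweighting from $\pi_1$ to $\bar\pi_1$. Because $\delta$ is a quotient of two differences-of-conditional-expectations and $\rho$ is itself a $\pi_1$-weighted combination of $\lambda_0,\lambda_1$, the naive error expansion generates many terms that must be shown to either cancel or combine into the stated products; in particular one must correctly track how $\frac{2Z-1}{\bar\pi_Z}$ interacts with a $P$-expectation over $Z$ whose weights are $\pi_1,1-\pi_1$, producing the asymmetric $\frac{1}{\bar\pi_1}+\frac{1}{1-\bar\pi_1}$ structure rather than a symmetric one. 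A secondary subtlety is that $M$ and hence $\mu_z$, $\delta^{M,A}$, $\delta$ are $\R^d$-valued, so all "products" above are really outer-product/Jacobian expressions; I would carry out the derivation componentwise (fixing a coordinate of $M$) to avoid notational hazards, then reassemble, since the structure is identical in each coordinate. Once the $\pi_1$-reweighting algebra and the quotient-rule differentiation are handled cleanly, both the EIF and the remainder formula follow by routine regrouping.
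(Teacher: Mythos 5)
Your proposal follows essentially the same route as the paper: the EIF is obtained by differentiating $h(\beta,P_t)$ along a parametric submodel and collecting the score-loaded contributions from the marginal of $X$, from $\mu_Z,\lambda_Z$ (as $\tfrac{2Z-1}{\pi_Z}$-weighted residuals scaled by the outer factor $\rho/\delta^A$), and from $\rho$; and the remainder is computed by iterated conditioning on $(Z,X)$ and then on $X$, followed by the quotient-rule regrouping of $\bar{\delta}^A(\bar{\delta}-\delta)$ into the stated product biases. One small correction to your narrative: the $\pi_1$-perturbation does not load onto something orthogonal and vanish --- it enters $h$ through $\rho=\pi_1\lambda_1+(1-\pi_1)\lambda_0$ and combines with the $\lambda_z$-perturbations to produce exactly the $(A-\rho(X))\,\delta(\beta,X)$ piece that, added to the $X$-marginal term $\delta\rho-h$, gives $\delta(\beta,X)A-h(\beta,P)$; the paper sidesteps this bookkeeping by treating $\rho=\EEst{A}{X}$ as a single conditional expectation and perturbing it directly, so your plan goes through with that adjustment.
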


From this result, we can derive the following multiple robustness structure.

\begin{corollary}[Multiple robustness structure]\label{cor:mult_robust}
The efficient influence function of $h(\beta,P)$ is unbiased if one of the following holds.
\begin{enumerate}\label{eqn:mult_robust}
    \item $\delta$ and $\pi_1$ are correctly specified.
    \item $\delta$, $\lambda_0$, and $\mu_0$ are correctly specified.
    \item $\delta^A$, $\pi_1$, and $\rho$ are correctly specified.
\end{enumerate}
\end{corollary}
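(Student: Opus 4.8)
The plan is to prove Corollary~\ref{cor:mult_robust} directly from the remainder formula in Theorem~\ref{thm:eif}. Recall that unbiasedness of the efficient influence function at a (possibly misspecified) nuisance configuration $\bar P$ means $h(\beta,P) = h(\beta,\bar P) + \Ep{P}{\dot h(\beta,O,\bar P)}$, equivalently $R(\bar P, P) = 0$, where here $\bar P$ shares the true marginal of $X$ but may have arbitrary (wrong) values of the conditional nuisances $\bar\mu_z, \bar\lambda_z, \bar\pi_z, \bar\rho$. So the entire task reduces to inspecting the two-term integrand of $R(\bar P,P)$ and checking that, under each of the three listed correctness patterns, the integrand vanishes pointwise in $X$ (hence the expectation is zero). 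I would set up notation once: write $\Delta_\mu^z = \bar\mu_z - \mu_z$, $\Delta_\lambda^z = \bar\lambda_z - \lambda_z$, $\Delta_\pi = \bar\pi_1 - \pi_1$, $\Delta_\rho = \bar\rho - \rho$, and note $\bar\delta^A - \delta^A = \Delta_\lambda^1 - \Delta_\lambda^0$ and $\bar\delta - \delta = \tfrac{\bar\mu_1-\bar\mu_0}{\bar\delta^A} - \tfrac{\mu_1-\mu_0}{\delta^A}$; ``$\delta$ correctly specified'' means $\bar\delta = \delta$ as a function of $X$, and similarly for the others.

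Then I would go case by case. \textbf{Case 1} ($\delta,\pi_1$ correct): $\bar\delta - \delta = 0$ kills the first term of the integrand; $\bar\pi_1 - \pi_1 = 0$ kills the factor $(\bar\pi_1 - \pi_1)$ in the second term; so the integrand is identically zero. \textbf{Case 3} ($\delta^A, \pi_1, \rho$ correct): $\bar\delta^A = \delta^A$ gives $\bar\delta^A - \delta^A = 0$, and $\bar\rho = \rho$ gives $\bar\rho - \rho = 0$, so the inner bracket $\tfrac{\bar\rho}{\bar\delta^A}(\bar\delta^A-\delta^A) - (\bar\rho-\rho)$ of the first term vanishes; $\bar\pi_1 = \pi_1$ again kills the second term; done. \textbf{Case 2} ($\delta, \lambda_0, \mu_0$ correct) is the only one requiring a small manipulation: $\bar\delta = \delta$ again kills the first term, so it remains to show the second term vanishes. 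With $\bar\delta = \delta$ and $\Delta_\mu^0 = \Delta_\lambda^0 = 0$, the inner parenthesis of the second term is $\tfrac{\Delta_\mu^1 - \delta\,\Delta_\lambda^1}{\bar\pi_1} + 0$. Now use that $\bar\delta = \tfrac{\bar\mu_1 - \bar\mu_0}{\bar\lambda_1 - \bar\lambda_0} = \delta$, i.e. $\bar\mu_1 - \bar\mu_0 = \delta(\bar\lambda_1 - \bar\lambda_0)$; combining with $\bar\mu_0 = \mu_0$, $\bar\lambda_0 = \lambda_0$ and $\mu_1 - \mu_0 = \delta(\lambda_1-\lambda_0)$ (which holds because $\delta = \tfrac{\mu_1-\mu_0}{\lambda_1-\lambda_0}$ at the truth) yields $\Delta_\mu^1 = \bar\mu_1 - \mu_1 = \delta(\bar\lambda_1 - \lambda_1) = \delta\,\Delta_\lambda^1$, so $\Delta_\mu^1 - \delta\,\Delta_\lambda^1 = 0$. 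Hence the inner parenthesis is zero and the whole second term vanishes.

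I would conclude by noting that in all three cases $R(\bar P,P) = \Ep{P}{0} = 0$, which is exactly unbiasedness of $\dot h(\beta,O,P)$ (with $P$ replaced by $\bar P$) as claimed; one should also remark that correctness of $\delta$ is shorthand for correctness of the ratio $\delta^{M,A}/\delta^A$, and that the identity $\mu_1 - \mu_0 = \delta^A \cdot \delta$ is definitional, so no extra assumptions are smuggled in. The only mildly delicate point — and the place where I would be most careful — is Case 2: one must resist the temptation to conclude from ``$\delta$ correct'' alone and instead use the algebraic consequence of $\bar\delta = \delta$ together with the two components $\bar\mu_0,\bar\lambda_0$ being correct to force the $z=1$ discrepancy $\Delta_\mu^1 - \delta\Delta_\lambda^1$ to zero; everything else is immediate substitution. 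No real obstacle is anticipated — the corollary is a direct structural reading of the remainder term — so the ``hard part'' is purely bookkeeping: keeping the $\bar{}\;$ versus non-$\bar{}\;$ nuisances straight and remembering which ratio identities are definitional versus assumed.
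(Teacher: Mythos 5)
Your proposal is correct and follows essentially the same route as the paper: cases 1 and 3 are read off directly from the remainder formula $R(\bar P,P)$ in Theorem~\ref{thm:eif}, and case 2 reduces to showing $(\bar\mu_1-\mu_1)-\bar\delta(\bar\lambda_1-\lambda_1)=0$, which the paper establishes via the identity $(\bar\mu_1-\mu_1)-\bar\delta(\bar\lambda_1-\lambda_1)=(\bar\mu_0-\mu_0)+(\bar\delta-\delta)(\lambda_1-\bar\lambda_0)+\delta(\lambda_0-\bar\lambda_0)$ — the same algebraic fact you derive by subtracting $\mu_1-\mu_0=\delta(\lambda_1-\lambda_0)$ from $\bar\mu_1-\bar\mu_0=\delta(\bar\lambda_1-\bar\lambda_0)$.
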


We may proceed to construct a multiply robust semiparametric efficient estimator of $h(\beta,P)$ by solving an empirical version of the efficient influence function, which yields 
\begin{equation}\label{eqn:est_h}
\begin{split}
    \hat{h}(\beta) &= h(\beta,\hat{P}) + \Ep{n}{\dot{h}(\beta,O,\hat{P})}\\
    &= \Ep{n}{\hat{\delta}(\beta,X) A + \frac{\hat{\rho}(X)}{\hat{\delta}^A(X)}\cdot\frac{2Z-1}{\hat{\pi}_Z(X)}\cdot\left(M( Y,\beta) (1-A)-\hat{\mu}_Z(\beta,X) - \hat{\delta}(\beta,X)(A-\hat{\lambda}_Z(X))\right)},
\end{split}
\end{equation}
where $\Ep{n}{f(O)} = \frac{1}{n}\sum_{i=1}^n f(O_i)$ denotes the expectation with respect to the empirical distribution, and $\hat{P}$ represents the distribution determined by the nuisance estimators $\hat{\eta} = (\hat{\rho}, \hat{\pi}_1, \hat{\mu}_0, \hat{\mu}_1, \hat{\lambda}_0, \hat{\lambda}_1)$, which are constructed independently of the data. For example, we can split the data into two sets, using one to construct $\hat{P}$ and the other for estimation and inference. Later in this section, we discuss the application of cross-fitting, which allows the use of multiple folds with switching roles. As a remark, we will occasionally abuse the notation and write $\dot{h}(\beta, O, \hat{\eta})$ instead of $\dot{h}(\beta, O, \hat{P})$ when $\hat{P}$ represents the distribution corresponding to the nuisance functions $\hat{\eta}$.

We also show that under certain regularity conditions, the estimator $\hat{h}(\beta)$ has asymptotic normality:

\begin{corollary}\label{cor:asymp}
    If $\|h(\beta,\hat{P}) - h(\beta,P)\| = o_P(1)$ and $R(\hat{P},P) = o_P(n^{-1/2})$, then
    \[\sqrt{n}(\hat{h}(\beta) - h(\beta,P)) \xrightarrow{D} N(0,\textnormal{Var}_P(\dot{h}(\beta,O,P))). \]
\end{corollary}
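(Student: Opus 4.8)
The plan is to use the standard ``one-step estimator is asymptotically linear'' decomposition. Writing $\hat h(\beta) = h(\beta,\hat P) + \Ep{n}{\dot h(\beta,O,\hat P)}$ and adding and subtracting terms, I would decompose
\[
\hat h(\beta) - h(\beta,P) = \Ep{n}{\dot h(\beta,O,P)} + (\Ep{n} - \Ep{P})\big[\dot h(\beta,O,\hat P) - \dot h(\beta,O,P)\big] + R(\hat P,P),
\]
where the last identity uses the definition of the remainder term $R(\bar P,P)$ from Theorem~\ref{thm:eif} together with the fact that $\Ep{P}{\dot h(\beta,O,P)} = 0$ (mean-zero property of the efficient influence function). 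Multiplying by $\sqrt n$, the first term converges in distribution to $N(0,\textnormal{Var}_P(\dot h(\beta,O,P)))$ by the ordinary CLT, so it suffices to show the remaining two terms are $o_P(n^{-1/2})$.

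For the third term this is immediate from the hypothesis $R(\hat P,P) = o_P(n^{-1/2})$. For the empirical-process (``drift'') term $(\Ep{n}-\Ep{P})[\dot h(\beta,O,\hat P) - \dot h(\beta,O,P)]$, the standard route is to invoke sample splitting: since $\hat P$ (equivalently $\hat\eta$) is constructed on an independent fold, conditionally on that fold the centered empirical average has variance $n^{-1}\textnormal{Var}_P(\dot h(\beta,O,\hat P) - \dot h(\beta,O,P))$, so by Chebyshev the term is $O_P(n^{-1/2})$ times the conditional $L_2(P)$ norm $\|\dot h(\beta,O,\hat P) - \dot h(\beta,O,P)\|_{L_2(P)}$. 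One then argues this norm is $o_P(1)$: inspecting the explicit form \eqref{eqn:eif_moment}, $\dot h$ is a fixed smooth (rational) transformation of the nuisance functions, with denominators $\delta^A$ and $\pi_Z$ bounded away from zero under the implicit positivity/overlap regularity conditions, so consistency of each nuisance estimator in $L_2(P)$ — which is exactly what $\|h(\beta,\hat P) - h(\beta,P)\| = o_P(1)$ plus the remainder condition deliver, or which we simply take as a maintained regularity assumption — propagates to $\|\dot h(\beta,O,\hat P) - \dot h(\beta,O,P)\|_{L_2(P)} = o_P(1)$ by the continuous mapping theorem. With full cross-fitting one repeats this fold-by-fold and averages; since a finite number of folds is used, the aggregate is still $o_P(n^{-1/2})$ and the influence-function term across folds still averages to the full-sample $\Ep{n}{\dot h(\beta,O,P)}$ up to $o_P(n^{-1/2})$.

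Combining the three pieces, $\sqrt n(\hat h(\beta) - h(\beta,P)) = \sqrt n\,\Ep{n}{\dot h(\beta,O,P)} + o_P(1) \xrightarrow{D} N(0,\textnormal{Var}_P(\dot h(\beta,O,P)))$ by Slutsky. I expect the main obstacle to be the empirical-process term: making the argument clean requires either an explicit Donsker-type condition on the class of candidate nuisance functions or — as is more natural here given the machine-learning motivation — the cross-fitting device plus a careful statement of the conditional-variance argument, and it also requires spelling out the positivity conditions (e.g. $\delta^A(X)$ bounded away from $0$, $\pi_Z(X) \in [c,1-c]$, moment bounds on $M(Y,\beta)$) that are only implicit in the theorem as stated so that the rational function $\dot h$ is well-behaved and $\textnormal{Var}_P(\dot h(\beta,O,P)) < \infty$. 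The remainder condition is handed to us by hypothesis, and the leading-term CLT is routine, so essentially all the real work is in controlling that second term.
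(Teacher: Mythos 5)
Your proposal is correct and follows essentially the same route as the paper: the identical three-term decomposition, the CLT plus Slutsky for the leading term, the hypothesis for the remainder, and a sample-splitting/conditional-Chebyshev bound for the empirical-process term --- the paper handles that last step by citing a lemma of Kennedy (2020) stating $(\mathbb{P}_n - P)(\hat f - f) = O_P(\|\hat f - f\|/\sqrt{n})$ for independently constructed $\hat f$, which is exactly the argument you rederive. Your observation that passing from the stated hypothesis $\|h(\beta,\hat P)-h(\beta,P)\| = o_P(1)$ to $\|\dot h(\beta,O,\hat P)-\dot h(\beta,O,P)\|_{L_2(P)} = o_P(1)$ requires implicit positivity and boundedness conditions is apt; the paper glosses over this step as well.
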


Based on these observations, in the setting where $\beta \in \R$ and $h(\beta,P) \in \R$ are scalars, we can construct a confidence interval for $h(\beta,P)$ with asymptotic coverage as follows:
\begin{equation}\label{eqn:CI_moment}
    \chh(\beta) = \left[\hat{h}(\beta) - z_\alpha \cdot \frac{1}{\sqrt{n}}\cdot\sqrt{\Ep{n}{\dot{h}(\beta,O,\hat{P})^2}}, \hat{h}(\beta) + z_\alpha \cdot \frac{1}{\sqrt{n}}\cdot\sqrt{\Ep{n}{\dot{h}(\beta,O,\hat{P})^2}}
    \right],
\end{equation}
where $\alpha \in (0,1)$ is a predefined level.

 More generally, for multivariate $\beta$ and $h(\beta,P)$ in $\R^d$, one can construct the confidence interval for $w^\top h(\beta,P)$---where $w \in \R^d$ is a fixed vector---as follows.
\begin{equation}\label{eqn:CI_moment_mult}
    \chh^w(\beta) = \left[w^\top \hat{h}(\beta) - z_\alpha \cdot \frac{1}{\sqrt{n}}\cdot\sqrt{w^\top \Ep{n}{\dot{h}(\beta,O,\hat{P})^2} w}, w^\top\hat{h}(\beta) + z_\alpha \cdot \frac{1}{\sqrt{n}}\cdot\sqrt{w^\top \Ep{n}{\dot{h}(\beta,O,\hat{P})^2}w}
    \right],
\end{equation}

The above confidence intervals are asymptotically valid---in the sense that the coverage rate converges to the target level $1-\alpha$---under certain regularity conditions. We provide theoretical details in the following section, which discusses general procedure with cross-fitting based on multiple folds.

\subsubsection{Procedure with cross-fitting}

Cross-fitting~\citep{chernozhukov2018double} is a widely used technique in machine learning and causal inference that allows efficient use of data while preserving the validity of inference. Specifically, cross-fitting begins by constructing multiple data folds. The process then involves using one fold for inference, where the nuisance parameters involved are trained on all remaining folds. Finally, the inference results from different folds are aggregated by simple averaging. Cross-fitting ensures the validity of inference by separating the data used for training from that used for inference in each repetition, while improving the quality of estimation by effectively integrating all data points to construct the estimators.

Here, we present the procedure for performing inference on the functional using cross-fitting within the context of our multiplicative IV setting. For simplicity, we fix $\beta \in \mathcal{B}$ and write the target as $\theta^* = h(\beta,P)$. From now on, we denote the true nuisance functions as $\eta^* = (\rho^*,\pi_1^*, \mu_0^*, \mu_1^*, \lambda_0^*, \lambda_1^*)$ to avoid confusion.

We first construct multiple folds of the data $(O_i)_{i \in [n]}$ by constructing a partition $\{I_1,\cdots,I_K\}$ of $[n]$, i.e., $I_1 \cup \cdots \cup I_K = [n]$ and $I_k$s are disjoint, where we set each partition size to be $|I_k| = n/K$. Then for each $k=1,2,\cdots,K$, one repeat the following steps:
\begin{enumerate}
    \item Using the data $(O_i)_{i \in I_k^c}$, construct the nuisance estimators $\hat{\eta}_{-k} = (\hat{\rho}^{-k}, \hat{\pi}_1^{-k}, \hat{\mu}_0^{-k}, \hat{\mu}_1^{-k}, \hat{\lambda}_0^{-k}, \hat{\lambda}_1^{-k})$.

    \item Construct the estimator $\hat{\theta}_k$ using the data $(O_i)_{i \in I_k}$, based on~\eqref{eqn:est_h}.

    \item Construct the variance estimator $\hat{\sigma}_k^2 = \Ep{I_k}{\dot{h}(\beta,O,\hat{\eta}_{-k})\cdot \dot{h}(\beta,O,\hat{\eta}_{-k})^\top}$, where $\mathbb{E}_{I_k}$ denotes the average over the $k$-th fold.
\end{enumerate}
Then the final estimator for $\theta^* = h(\beta,P)$ is given as $\hat{\theta} = \frac{1}{K}\sum_{k=1}^K \hat{\theta}_k$,
and the confidence interval for $\theta^*$---in the case $\theta^* \in \R$---is constructed as
\begin{equation}\label{eqn:CI_crossfit}
    \chh(\beta) = \left[\hat{\theta} - z_\alpha \cdot \frac{\hat{\sigma}}{\sqrt{n}},\; \hat{\theta} + z_\alpha \cdot \frac{\hat{\sigma}}{\sqrt{n}}
    \right],\qquad\textnormal{ where } \hat{\sigma}^2 = \frac{1}{K}\sum_{k=1}^K \hat{\sigma}_k^2.
\end{equation}
Generally, for $\theta^* \in \R^d$, the confidence interval for $w^\top \theta^*$, where $w \in \R^d$ is an arbitrary vector, can be constructed as

\begin{equation}\label{eqn:CI_crossfit_w}
    \chh^w(\beta) = \left[w^\top\hat{\theta} - z_\alpha \cdot \sqrt{\frac{w^\top\hat{\Sigma} w}{n}},\; \hat{\theta} + z_\alpha \cdot \sqrt{\frac{w^\top\hat{\Sigma} w}{n}}
    \right],\qquad\textnormal{ where } \hat{\Sigma} = \frac{1}{K}\sum_{k=1}^K \hat{\sigma}_k^2.
\end{equation}

We now present the theoretical result for the asymptotic coverage of the aforementioned confidence interval. First, we state the regularity conditions:

\begin{assumption}\label{asm:regularity}
There exists a sequence of family of distributions $(\mathcal{P}_n)_{n \in \N}$ such that the following conditions hold:

\begin{enumerate}
    \item There exist constants $c_1, c_2, c_3, c_4, c_5, c_6 > 0$ and sequences $(\tau_n)_{n \in \mathbb{N}} \subset \mathbb{R}^+$, and $(\eps_n)_{n \in \mathbb{N}} \subset \mathbb{R}^+$ such that $\lim_{n \to \infty} \tau_n = \lim_{n \to \infty} \eps_n = 0$ and $\tau_n > 1/\sqrt{n}$ hold, and that for each $k \in [K]$, the nuisance estimator $\hat{\eta}_{-k}$ satisfies the following conditions for all $P \in \mathcal{P}_n$, with probability at least $1 - \eps_n$:
    \begin{enumerate}
        \item $\|\hat{\eta}_{-k}(X)\|_\infty \leq c_1 \quad \textnormal{almost surely under } P$.
        \item $\Ep{P}{\|\hat{\eta}_{-k} (X) - \eta^*(X)\|^2}^{1/2} \leq c_2 \cdot \tau_n$.
        \item The following product biases are bounded by $c_3\cdot \tau_n / \sqrt{n}$:
    \begin{alignat*}{3}
        &\|(\hat{\delta}^{-k} - \delta^*)({\delta^A}^{-k} - {\delta^A}^*)\|,\quad 
        &&\|(\hat{\delta}^{-k} - \delta^*)(\hat{\rho}^{-k} - \rho^*)\|,\quad 
        &&\|(\hat{\pi}_1^{-k} - \pi_1^*)(\hat{\mu}_1^{-k} - \mu_1^*)\|,\\
        &\|(\hat{\pi}_1^{-k} - \pi_1^*)(\hat{\mu}_0^{-k} - \mu_0^*)\|,\quad 
        &&\|(\hat{\pi}_1^{-k} - \pi_1^*)(\hat{\lambda}_1^{-k} - \lambda_1^*)\|,\quad 
        &&\|(\hat{\pi}_1^{-k} - \pi_1^*)(\hat{\lambda}_0^{-k} - \lambda_0^*)\|.
    \end{alignat*}
    \item $c_4 \leq \hat{\pi}_1^{-k}(X) \leq 1-c_5$ and $|\hat{\lambda}_1^{-k}(X) - \hat{\lambda}_0^{-k}(X)| \geq c_6$ hold almost surely under $P$.
    \end{enumerate}
    \item The moment function is uniformly bounded, i.e., there exists a constant $c$ such that $|M(Y,\beta)| \leq c$ almost surely under $P$, for all $P \in \mathcal{P}_n$.
    
    \item The variance of the efficient influence function is non-degenerate: there exists a constant $c' > 0$ such that
    \[\Ep{P}{\dot{h}(\beta,O,P) \dot{h}(\beta,O,P)^\top} \succeq c' I,\]
    for all $P \in \mathcal{P}_n$.
\end{enumerate}

\end{assumption}

Intuitively, the first condition in Assumption~\ref{asm:regularity} requires that the nuisance estimator is bounded, consistent in the $L^2$ norm, and that the second-order remainder term vanishes at a rate of at least $1/\sqrt{n}$---note that the product biases in condition 1(c) are exactly the terms that appear in the expression for the second-order bias $R(\hat{P}, P)$. For example, one can consider the case where all the nuisance estimators converge at a rate of $o(n^{-1/4})$. Additionally, even if some nuisance estimators converge at a slower rate than \( o(n^{-1/4}) \), the bias---being a product of estimation errors---can still be controlled, as the order faster-converging nuisance estimators may compensate for the slower ones, ensuring the overall bias remains of smaller order.

\begin{theorem}\label{thm:unif_conv}
Suppose Assumption~\ref{asm:regularity} hold. Then for any $w \in \R^d$, the confidence interval~\eqref{eqn:CI_crossfit_w} satisfies
\[\lim_{n \rightarrow \infty} \sup_{P \in \mathcal{P}_n} \left|\Pp{P}{w^\top h(\beta,P) \in \chh^w(\beta)} - (1-\alpha)\right| = 0.\]
\end{theorem}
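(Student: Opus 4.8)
The plan is to reduce the uniform coverage claim to a uniform central limit theorem for the cross-fitted estimator, following the standard debiased-machine-learning template of \citet{chernozhukov2018double} but keeping track of all constants so the statement holds uniformly over $\mathcal{P}_n$. Fix $w \in \R^d$ and write $\theta^* = h(\beta,P)$, $\psi(O) = \dot h(\beta,O,P)$, $\sigma_P^2 = w^\top \Ep{P}{\psi \psi^\top} w$. By Assumption~\ref{asm:regularity}(3), $\sigma_P^2 \geq c' \|w\|^2$ is bounded away from zero, and by the boundedness of $M$ in Assumption~\ref{asm:regularity}(2) together with the boundedness/overlap in conditions 1(a) and 1(d) one checks that $\psi$ has a uniformly bounded fourth moment under $P \in \mathcal{P}_n$; these are exactly the ingredients needed for a Lyapunov-type uniform CLT. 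So it suffices to show
\begin{equation*}
    \sqrt{n}\,\frac{w^\top(\hat\theta - \theta^*)}{\sigma_P} = \frac{1}{\sqrt{n}}\sum_{i=1}^n \frac{w^\top \psi(O_i)}{\sigma_P} + o_P(1),
\end{equation*}
uniformly over $P \in \mathcal{P}_n$, and separately that $w^\top \hat\Sigma w / \sigma_P^2 \xrightarrow{P} 1$ uniformly; the conclusion then follows from Slutsky and the Berry–Esseen / Polya-type uniform convergence of the normalized sum to $\Phi$.

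The core of the argument is the per-fold expansion. For each fold $k$, on the event of probability $\geq 1 - \eps_n$ where $\hat\eta_{-k}$ satisfies condition 1, decompose
\begin{equation*}
    \hat\theta_k - \theta^* = \big(\Ep{I_k}{\dot h(\beta,O,\hat\eta_{-k})} - \Ep{P}{\dot h(\beta,O,\hat\eta_{-k})}\big) + \big(h(\beta,\hat P_{-k}) - \theta^* + \Ep{P}{\dot h(\beta,O,\hat\eta_{-k})}\big),
\end{equation*}
where the second bracket is exactly $R(\hat P_{-k},P)$ and is bounded in norm, via the explicit remainder formula in Theorem~\ref{thm:eif} together with Cauchy–Schwarz, by a constant multiple of the six product-bias terms in condition 1(c), hence is $O_P(\tau_n/\sqrt n) = o_P(n^{-1/2})$. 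The first bracket is an empirical-process term: writing $f_{-k}(O) = \dot h(\beta,O,\hat\eta_{-k}) - \dot h(\beta,O,\eta^*)$, since $\hat\eta_{-k}$ is independent of $(O_i)_{i\in I_k}$, we condition on $I_k^c$ and bound the conditional variance of $|I_k|^{-1/2}\sum_{i\in I_k} f_{-k}(O_i)$ by $\Ep{P}{\|f_{-k}\|^2}$, which by Lipschitz continuity of $\dot h$ in its nuisance arguments (guaranteed by the boundedness and overlap conditions 1(a), 1(d)) is $\lesssim \Ep{P}{\|\hat\eta_{-k} - \eta^*\|^2} \leq c_2^2\tau_n^2 = o(1)$; a conditional Chebyshev argument then gives that this term is $o_P(n^{-1/2})$. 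What remains, $\Ep{I_k}{\dot h(\beta,O,\eta^*)} - \theta^*$ (using $\Ep{P}{\dot h(\beta,O,\eta^*)} = 0$), averaged over $k$ with $|I_k| = n/K$, reassembles exactly to $n^{-1}\sum_{i=1}^n \psi(O_i)$, giving the displayed linearization. For the variance estimator, $\hat\sigma_k^2 = \Ep{I_k}{\dot h(\beta,O,\hat\eta_{-k})\dot h(\beta,O,\hat\eta_{-k})^\top}$ is handled the same way: replace $\hat\eta_{-k}$ by $\eta^*$ at cost $o_P(1)$ using the $L^2$-consistency and boundedness, then apply a uniform law of large numbers to $\Ep{I_k}{\psi\psi^\top}$, yielding $\hat\Sigma = \Ep{P}{\psi\psi^\top} + o_P(1)$ uniformly.

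I expect the main obstacle to be making every bound genuinely \emph{uniform} over the moving sequence of model classes $\mathcal{P}_n$ rather than pointwise in a fixed $P$. Concretely: the empirical-process and Chebyshev steps must be phrased as explicit finite-sample tail bounds with constants depending only on $(c_1,\dots,c_6,c,c')$, so that taking $\sup_{P\in\mathcal{P}_n}$ and then $n\to\infty$ is legitimate; one must also absorb the probability-$\eps_n$ failure event of condition 1 into an additive $\eps_n \to 0$ slack in the coverage probability; and the final Gaussian approximation must be a uniform (Berry–Esseen) statement, which is why the uniform lower bound $c'$ on the variance and the uniform moment bound on $\psi$ in Assumption~\ref{asm:regularity}(2)--(3) are essential — without them the normalized sum need not converge to $\Phi$ uniformly. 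Everything else — the decomposition, the remainder control via Theorem~\ref{thm:eif}, the Lipschitz/sample-splitting cancellation — is routine; the bookkeeping to upgrade pointwise to uniform is the part that needs care.
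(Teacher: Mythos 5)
Your proposal is correct and follows essentially the same route as the paper: both reduce the claim to the Chernozhukov et al.\ (2018) cross-fitted linear-score framework, using the identical per-fold decomposition into an empirical-process term (killed by sample splitting plus the $L^2$ rate $\tau_n$) and the second-order remainder $R(\hat P_{-k},P)$, which is bounded by the six product biases of condition 1(c) via the explicit formula in Theorem~\ref{thm:eif}, followed by a uniform Gaussian approximation under the nondegeneracy and boundedness conditions. The only difference is presentational: the paper invokes the DML uniform-convergence result as a black-box lemma and merely verifies its hypotheses (Neyman orthogonality, the $\Omega_n$ realization set, the remainder bound), whereas you re-derive that machinery explicitly, including the Berry--Esseen step and the absorption of the probability-$\eps_n$ failure event.
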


A direct consequence of the uniform convergence result in Theorem~\ref{thm:unif_conv} is the following corollary. Basically, it states that if the nuisance estimator is consistent and has product biases of the scale $o(1/\sqrt{n})$, then the coverage rate of the confidence interval $\chh(\beta)$ converges to the target level.

\begin{corollary}\label{cor:moment_cov}
    Suppose that $h(\beta,P) \in \R$ and that Assumption~\ref{asm:regularity} holds for $\mathcal{P}_n \equiv \{P\}$. Then the confidence interval $\chh(\beta)$ from~\eqref{eqn:CI_crossfit} satisfies
    \[\lim_{n \rightarrow \infty}\PP{h(\beta,P) \in \chh(\beta)} = 1-\alpha.\]
\end{corollary}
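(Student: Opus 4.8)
The plan is to derive the corollary as a direct specialization of Theorem~\ref{thm:unif_conv}. First I would observe that the hypotheses of Corollary~\ref{cor:moment_cov} are exactly those required to invoke Theorem~\ref{thm:unif_conv} with the constant sequence $\mathcal{P}_n \equiv \{P\}$: Assumption~\ref{asm:regularity} is assumed to hold for this trivial family, and $h(\beta,P)\in\R$ so that $d=1$. Then I would take $w = 1 \in \R^1$, for which $\chh^w(\beta)$ in~\eqref{eqn:CI_crossfit_w} reduces to $\chh(\beta)$ in~\eqref{eqn:CI_crossfit}, since $w^\top \hat\Sigma w = \hat\Sigma = \hat\sigma^2$ and $w^\top h(\beta,P) = h(\beta,P)$.

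Applying Theorem~\ref{thm:unif_conv} with this choice gives
\[
\lim_{n\to\infty}\sup_{P'\in\mathcal{P}_n}\left|\Pp{P'}{h(\beta,P')\in\chh(\beta)} - (1-\alpha)\right| = 0.
\]
Since $\mathcal{P}_n = \{P\}$ is a singleton for every $n$, the supremum is simply the single term at $P' = P$, so the displayed limit reads $\lim_{n\to\infty}\left|\Pp{P}{h(\beta,P)\in\chh(\beta)} - (1-\alpha)\right| = 0$, which is equivalent to $\lim_{n\to\infty}\PP{h(\beta,P)\in\chh(\beta)} = 1-\alpha$, as claimed.

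There is essentially no obstacle here — the corollary is a pointwise reading of the uniform statement, and the only things to verify are bookkeeping: that a constant family is a legitimate choice of $(\mathcal{P}_n)$ in Assumption~\ref{asm:regularity} (it is, since nothing in that assumption precludes it), and that the one-dimensional case of~\eqref{eqn:CI_crossfit_w} literally coincides with~\eqref{eqn:CI_crossfit}. If one instead wanted a self-contained argument not routed through Theorem~\ref{thm:unif_conv}, the substantive work would all be in re-deriving that theorem: showing $\sqrt{n}(\hat\theta - \theta^*)$ is asymptotically normal via the decomposition of $\hat\theta - \theta^*$ into an empirical-process term (controlled by sample splitting and $L^2$-consistency of $\hat\eta_{-k}$), the second-order remainder $R(\hat P,P)$ (controlled by the product-bias conditions in Assumption~\ref{asm:regularity}(1c) together with Theorem~\ref{thm:eif}), and a leading average of i.i.d. influence-function terms, followed by consistency of $\hat\sigma^2$ for the non-degenerate limiting variance — but given Theorem~\ref{thm:unif_conv} is already in hand, none of that needs to be repeated.
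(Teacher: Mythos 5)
Your proposal is correct and matches the paper's intent exactly: the paper states Corollary~\ref{cor:moment_cov} as a direct consequence of Theorem~\ref{thm:unif_conv} and gives no separate proof, and your specialization to the singleton family $\mathcal{P}_n \equiv \{P\}$ with $d=1$, $w=1$ is precisely that reduction.
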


\subsection{Inference on a functional of counterfactual distribution}\label{sec:functional}

We next discuss the estimation and inference for the parameter $\beta^*$, which is given as the solution of the moment equation $h(\beta,P) = 0$.
We first look into a few examples.

\paragraph{Analysis on the average treatment effect on the treated (ATT)}

Suppose our target is ATT, defined as $\EEst{Y^{a=1}-Y^{a=0}}{A=1}$. We consider the moment equation~\ref{eqn:moment_equation} with $M(Y^{a=0},\beta) = Y^{a=0} - \beta$, whose solution is given by $\beta^* = \EEst{Y^{a=0}}{A=1}$. In this case, we have $h(\beta,P) = h(0,P) - \beta\cdot \PP{A=1}$, indicating that $\beta^* = h(0,P) / \PP{A=1}$ and consequently
\[\text{ATT} = \EEst{Y}{A=1} - \frac{1}{\PP{A=1}} \cdot h(0,P).\]
Therefore, we have the following efficient estimator for ATT,
\[\widehat{\text{ATT}} = \Epst{n}{Y}{A=1} - \frac{1}{\Pp{n}{A=1}}\cdot\hat{h}(0),\]
and we can directly apply Corollary~\ref{cor:moment_cov} for the inference on ATT. This recovers the results discussed in \cite{liu2025multiplicativeinstrumentalvariablemodel}.

\paragraph{Analysis on the median effect on the treated}

Now suppose we are interested in $\beta^* = \text{median}(P_{Y^{a=0} \mid A=1})$, which is the solution of the moment equation with $M(Y^{a=0},\beta) = \One{Y^{a=0} \geq \beta} - \frac{1}{2}$. For this target, $\beta^*$ cannot be expressed as a simple formula with $h(\cdot,P)$, and thus the result in Section~\ref{sec:moment_est} for $h(\beta,P)$ does not directly lead to an inference on $\beta^*$.\\

Generally, when the target functional $\beta^*$ does not have a closed-form expression, as in the second example, one can construct the confidence interval (or more generally, a confidence set) for $\beta^*$ as follows:
\begin{equation}\label{eqn:CI_beta}
    \chb = \left\{\beta \in \mathcal{B} : 0 \in \chh(\beta)\right\} = \left\{\beta : |\hat{h}(\beta)| \leq z_\alpha \cdot \frac{\hat{\sigma}}{\sqrt{n}}\right\},
\end{equation}
where $\chh(\beta)$ follows the definition in~\eqref{eqn:CI_crossfit}. Generally, for multivariate target $\beta^* \in \R^d$, the confidence interval can be constructed as
\begin{equation}\label{eqn:CI_beta_d}
\chb = \left\{\beta : n \cdot \|\hat{\Sigma}^{-1/2} \cdot\hat{h}(\beta)\|^2 \leq Q_{1-\alpha}(\chi_d^2)\right\},
\end{equation}
where $Q_{1-\alpha}(\chi_d^2)$ denotes the $(1-\alpha)$-quantile of the $\chi^2$ distribution with degrees of freedom $d$.

In practice, the parameter space $\mathcal{B}$ is often not a finite set, and thus computing $\chh(\beta)$ for all $\beta$ is not possible. In such cases, one can instead perform the computation on the grid $\tilde{\mathcal{B}}$, a finite subset of $\mathcal{B}$. The overall process of the procedure for a one-dimensional target parameter is outlined in Algorithm~\ref{alg:functional}.

\begin{corollary}\label{cor:CI_beta}
    Under the assumptions of Corollary~\ref{cor:asymp}, the interval $\chb$ from~\eqref{eqn:CI_beta_d} satisfies
     \[\lim_{n \rightarrow \infty}\PP{\beta^* \in \chb} = 1-\alpha.\]
\end{corollary}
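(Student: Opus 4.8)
\textbf{Proof proposal for Corollary~\ref{cor:CI_beta}.}

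The plan is to reduce the coverage statement for $\beta^*$ to the already-established asymptotic normality of $\hat h(\beta)$ at the true parameter value. First I would observe that, by construction of the confidence set in~\eqref{eqn:CI_beta_d}, we have the exact equivalence
\[
\beta^* \in \chb \iff n \cdot \|\hat\Sigma^{-1/2}\hat h(\beta^*)\|^2 \leq Q_{1-\alpha}(\chi_d^2).
\]
So it suffices to show that the test statistic $T_n := n\,\|\hat\Sigma^{-1/2}\hat h(\beta^*)\|^2$ converges in distribution to a $\chi_d^2$ random variable. Since $\beta^*$ solves the moment equation, $h(\beta^*,P)=0$, and hence Corollary~\ref{cor:asymp} (whose hypotheses we are assuming, now applied at the fixed value $\beta=\beta^*$) gives
\[
\sqrt n\,\hat h(\beta^*) = \sqrt n\,(\hat h(\beta^*) - h(\beta^*,P)) \xrightarrow{D} N(0,\Sigma),\qquad \Sigma := \textnormal{Var}_P(\dot h(\beta^*,O,P)).
\]

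Next I would handle the variance estimator. The cross-fit estimator $\hat\Sigma = \frac1K\sum_k \hat\sigma_k^2$ is built from empirical second moments of $\dot h(\beta^*,O,\hat\eta_{-k})$ over the held-out folds; under the consistency of the nuisance estimators together with boundedness (which, under the assumptions of Corollary~\ref{cor:asymp} as invoked here, yield $\|h(\beta^*,\hat P)-h(\beta^*,P)\|=o_P(1)$ and control of the influence-function estimate), a standard argument gives $\hat\Sigma \xrightarrow{P} \Sigma$. The non-degeneracy of $\Sigma$ — guaranteed here because Corollary~\ref{cor:asymp}'s conclusion is a genuine (non-degenerate) Gaussian limit, or more directly by Assumption~\ref{asm:regularity}(3) in the single-distribution regime — means $\Sigma$ is invertible, so $\hat\Sigma^{-1/2} \xrightarrow{P} \Sigma^{-1/2}$ by continuity of matrix inversion and square root on the positive-definite cone. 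Combining with the displayed CLT via Slutsky's theorem,
\[
\hat\Sigma^{-1/2}\,\sqrt n\,\hat h(\beta^*) \xrightarrow{D} \Sigma^{-1/2} N(0,\Sigma) = N(0,I_d),
\]
and therefore $T_n = \|\hat\Sigma^{-1/2}\sqrt n\,\hat h(\beta^*)\|^2 \xrightarrow{D} \chi_d^2$ by the continuous mapping theorem. The coverage conclusion $\lim_n \PP{\beta^*\in\chb} = \PP{\chi_d^2 \leq Q_{1-\alpha}(\chi_d^2)} = 1-\alpha$ then follows from the portmanteau lemma, using that the limiting $\chi_d^2$ law is continuous at its quantile.

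The main obstacle — really the only non-routine point — is the joint-distribution subtlety in the multivariate case: Corollary~\ref{cor:asymp} is stated for the scalar functional $\hat h(\beta)$, so to get a genuine $d$-dimensional CLT for $\sqrt n\,\hat h(\beta^*)$ I would either invoke the Cram\'er--Wold device together with the already-available scalar statements for each linear combination $w^\top \hat h(\beta^*)$ (whose influence function is $w^\top\dot h$ and whose second-order remainder is $w^\top R$, so Corollary~\ref{cor:asymp} applies coordinate-wise and along arbitrary directions), or simply note that the vector-valued version of Corollary~\ref{cor:asymp} holds by the identical argument with the remainder bound $R(\hat P,P)=o_P(n^{-1/2})$ interpreted entrywise. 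Beyond that, one must be slightly careful that the hypotheses of Corollary~\ref{cor:asymp} are exactly what is needed to also secure $\hat\Sigma \xrightarrow{P}\Sigma$ and the invertibility of $\Sigma$; these are mild and I would state them explicitly rather than belabor the uniform-integrability bookkeeping.
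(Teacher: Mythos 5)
Your proposal is correct and follows essentially the same route as the paper: both reduce coverage to the event $n\|\hat{\Sigma}^{-1/2}\hat{h}(\beta^*)\|^2 \leq Q_{1-\alpha}(\chi_d^2)$ using $h(\beta^*,P)=0$, establish the studentized limit $\sqrt{n}\,\hat{\Sigma}^{-1/2}\hat{h}(\beta^*) \xrightarrow{D} N(0,I_d)$, and conclude by the continuous mapping theorem; the paper simply imports that studentized CLT wholesale from \citet{chernozhukov2018double} (as used in the proof of Theorem~\ref{thm:unif_conv}), whereas you reassemble it from Corollary~\ref{cor:asymp} plus consistency of $\hat{\Sigma}$ and Slutsky. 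Your closing caveat—that the literal hypotheses of Corollary~\ref{cor:asymp} do not by themselves deliver $\hat{\Sigma}\xrightarrow{P}\Sigma$ or non-degeneracy of $\Sigma$—is a fair observation about the paper's own statement, which implicitly leans on Assumption~\ref{asm:regularity} through the cited result.
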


\begin{algorithm}
\caption{Inference on a functional in the Multiplicative IV model (one-dimensional target)}
\label{alg:functional}
\textbf{Input:} Dataset $(O_i)_{i \in [n]}$, target significance level $\alpha \in (0,1)$, moment function $M : \Y \times \mathcal{B} \rightarrow \R^d$, grid $\tilde{\mathcal{B}} \subset \mathcal{B}$.

\textbf{Step 1:} Split the dataset into $K$ folds, $I_1, I_2,\cdots,I_k$.

\textbf{Step 2:} For $k=1,2,\cdots,K$, construct the nuisance estimators $\hat{\eta}_{-k}$ using $(O_i)_{i \notin I_k}$.

\textbf{Step 3:} Repeat the following steps for $\beta \in \tilde{\mathcal{B}}$.

\textbf{Step 3-1:} For $k=1,2,\cdots,K$, repeat the following steps:
\begin{enumerate}
    \item Construct the estimator $\hat{\theta}_k$ for $h(\beta,P)$ using the data $(O_i)_{i \in I_k}$, following the expression in~\eqref{eqn:est_h}.

    \item Compute $\hat{\sigma}_k^2 = \Ep{I_k}{\dot{h}(\beta,O,\hat{\eta}_{-k})^2}$.
   
\end{enumerate}

\textbf{Step 3-2:} Compute the cross-fitting estimators $\hat{\theta}_\beta = \frac{1}{K}\sum_{k=1}^K \hat{\theta}_k$ and $\hat{\sigma}_\beta^2 = \frac{1}{K}\sum_{k=1}^K \hat{\sigma}_k^2$.

\textbf{Step 4:} Compute $\beta_L = \min\left\{\beta \in \tilde{\mathcal{B}} : |\hat{\theta}_\beta| \leq z_\alpha\cdot \frac{\hat{\sigma}_\beta}{\sqrt{n}}\right\}$ and $\beta_R = \min\left\{\beta \in \tilde{\mathcal{B}} : |\hat{\theta}_\beta| \leq z_\alpha\cdot \frac{\hat{\sigma}_\beta}{\sqrt{n}}\right\}$. 

\textbf{Output:} Confidence set $\chb = [\beta_L,\beta_R]$.

\end{algorithm}

\begin{remark}
The confidence interval $\chb$ in~\eqref{eqn:CI_beta} is not theoretically guaranteed to be an interval in the one-dimensional case. Consequently, the output of the grid-based approach described in Algorithm~\ref{alg:functional} may not provide an accurate approximation of $\chb$. We remark that this approach is suggested as a practical numerical method that generally performs well in practice.
\end{remark}

\subsection{Extension: inference on the quantile treatment effect on the treated}\label{sec:qtt}

In this section, we explore the problem of inference on the quantile treatment effect on the treated (QTT), $\xi^* = F_{Y^{a=1} \mid A=1}^{-1}(q) - F_{Y^{a=0} \mid A=1}^{-1}(q)$. Given a target quantile $q \in (0,1)$, the method~\ref{alg:functional} for inference on functionals can be applied to the counterfactual quantile $F_{Y^{a=0} \mid A=1}^{-1}(q)$, whereas inference on $F_{Y^{a=1} \mid A=1}^{-1}(q)$ can be carried out more directly using the samples $\{Y_i : A_i = 1\}$. But how can we obtain reliable inference for the difference between the two quantiles, rather than conducting separate inferences for each of them? A simple approach is to construct confidence intervals for each quantile separately and then apply the union bound to obtain a confidence interval for their difference, but this typically leads to a conservative interval. Here, we introduce an alternative approach inspired by the method of~\citet{berger1994p}.

We first construct a $(1-\zeta)$ confidence interval $\widehat{C}_{\gamma^*}$ for $\gamma^* := F_{Y^{a=1} \mid A=1}^{-1}(q)$ using the samples $\{Y_i : A_i = 1\}$, where $\zeta < \alpha$ is a predefined level. Then for different values of $\xi$, we compute
\[p(\xi) = \max_{\gamma \in \widehat{C}_{\gamma^*}} p(\xi;\gamma) + \zeta, \text{ where } p(\xi;\gamma) = 2(1-\Phi(\sqrt{n}\cdot|\hat{\theta}_{\xi+\gamma}|/\hat{\sigma}_{\xi+\gamma})),\]
where $\hat{\theta}_{\xi+\gamma}$ and $\hat{\sigma}_{\xi+\gamma}$ are computed according to the steps in Algorithm~\ref{alg:functional}, with $\beta = \xi+\gamma$. Intuitively, the term $p(\xi;\gamma)$ can be interpreted as a p-value for testing the hypothesis $H_{\xi+\gamma} : h(\xi+\gamma,P) = 0$, while $p(\xi)$ corresponds to a p-value for $H_{\xi+\gamma^*} : h(\xi+\gamma^*,P) = 0$---based on the result of~\citet{berger1994p}.
Then we construct the confidence set for the QTT $\xi^*$ as
\begin{equation}\label{eqn:CI_qtt}
    \widehat{C}_{\xi^*} = \left\{\xi : p(\xi) > \alpha\right\}.
\end{equation}

\begin{corollary}\label{cor:qtt}
    Under the assumptions of Corollary~\ref{cor:asymp}, The confidence set from~\eqref{eqn:CI_qtt} satisfies
    \[\lim_{n \rightarrow \infty}\PP{\xi^* \in \widehat{C}_{\xi^*}} \geq 1-\alpha.\]
\end{corollary}
The proof is deferred to the Appendix.

\section{Simulations}

In this section, we provide simulation results to illustrate the proposed method for the inference on the functional---specifically, the median of the counterfactual on the treated, $\beta^* = \text{median}(P_{Y^{a=0} \mid A=1})$. We generate the data as follows.

\begin{align*}
    &X = (X_1,X_2) \sim \textnormal{Unif}([0,1]^2),\qquad U \sim \textnormal{Unif}([0,1]),\\
    &Z \mid X \sim \textnormal{Bernoulli}\left(\frac{\exp(-1+\frac{1}{2}X_1 + X_2)}{1+\exp(-1+\frac{1}{2}X_1 + X_2)}\right),\\
    &A \mid Z,X,U \sim \textnormal{Bernoulli}\left(\exp\left(\frac{1+X_1+X_2+\frac{1}{2}X_1 X_2}{2}\cdot Z - X_1 - \frac{1}{2}X_2 - \frac{1}{3}U\right)\right),\\
    &Y^{a=1} \mid X,Z,U \sim N\left((X_1+\tfrac{1}{3}X_1^2+X_2+X_1X_2+Z)\cdot e^{U/3},1\right),\\
    &Y^{a=0} \sim N\left((X_1+X_2)\cdot e^{(U+\tfrac{1}{2})/5},1\right).
\end{align*}
We repeat the experiment with sample sizes $n = 400, 1000, 2000$. For each sample size $n$, we split the data into two parts, each of size $n/2$, and use the first part to fit the nuisance estimators via the super learner, which includes XGBoost and random forest as base algorithms. To approximate the true median $\beta^*$ for evaluation purposes, we generate a sample of size 5000 and compute the median of $Y^{a=0}$ values for which the treatment value is 1. We repeat the process of generating the data and checking whether the confidence intervals~\eqref{eqn:CI_beta} at levels $\alpha = 0.025, 0.05, 0.075, 0.1, \cdots, 0.2$ cover $\beta^*$, 500 times and compute the coverage rate---i.e., the proportion of trials in which $\beta^*$ is covered. The results are shown in Figure~\ref{fig:sim_1}, illustrating that the proposed confidence interval covers the target at the desired level tightly, even in the case of small sample sizes.

\begin{figure}[ht]
    \centering
    \includegraphics[width=0.9\textwidth]{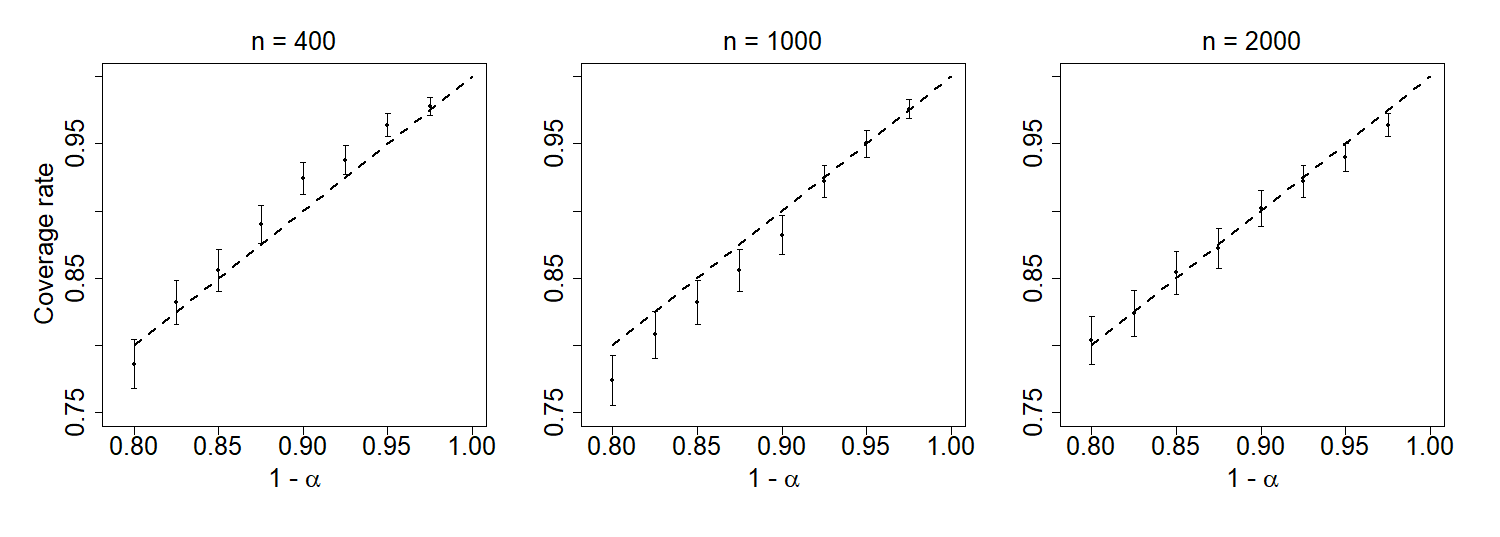}
    \caption{The coverage rate of the confidence interval~\eqref{eqn:CI_beta} for the median of the counterfactual on the treated with standard errors, from data splitting, at different levels. The dotted line corresponds to the $y=x$ line.}
    \label{fig:sim_1}
\end{figure}

Next, we repeat the simulation with cross-fitting. For each sample size $n$, we follow the same steps as before but include an additional step of switching the roles of the two splits. The results are shown in Figure~\ref{fig:sim_2}, illustrating that the proposed confidence interval again achieves the target coverage rate.

\begin{figure}[ht]
    \centering
    \includegraphics[width=0.9\textwidth]{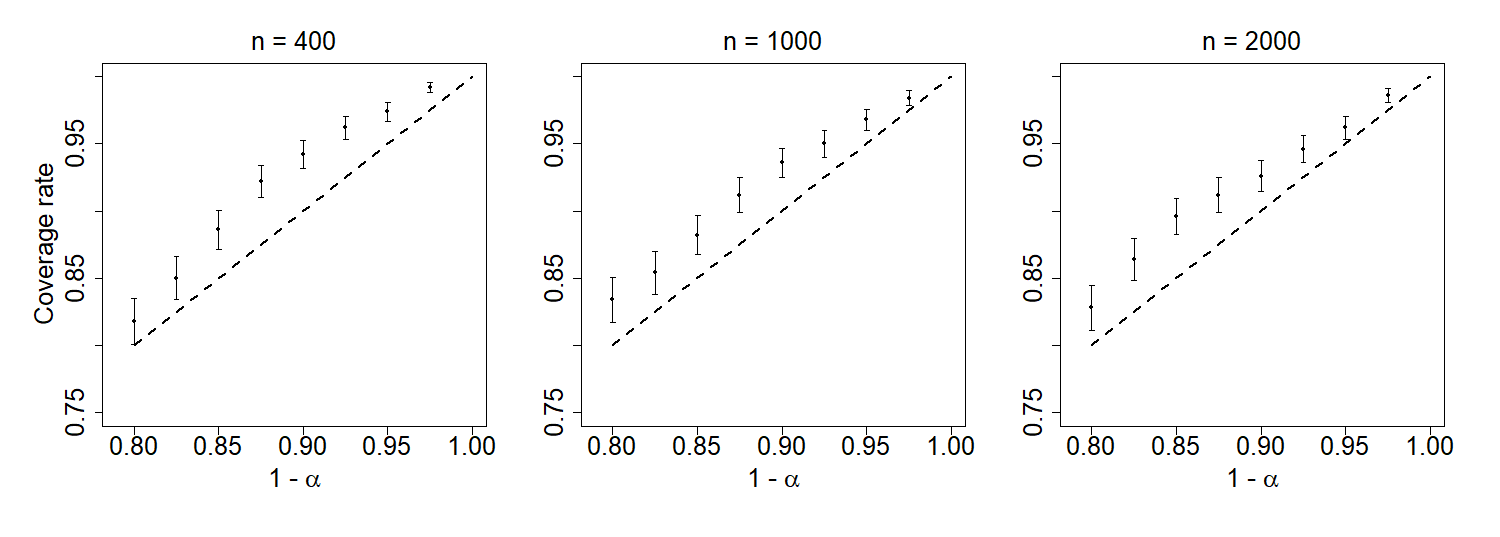}
    \caption{The coverage rate of the confidence interval~\eqref{eqn:CI_beta} for the median of the counterfactual on the treated with standard errors, from cross-fitting, at different levels. The dotted line corresponds to the $y=x$ line.}
    \label{fig:sim_2}
\end{figure}

\section{Application to Job Corps dataset}

We apply the proposed procedure to the Job Corps dataset~\citet{schochet2001national}. This dataset consists of 9,240 individuals who, after receiving an initial assignment, were given the option to participate in the job training program or not. We consider the initial assignment as the instrumental variable and the individuals' actual decision as the treatment, aiming to conduct inference on the median treatment effect on the treated, with the outcome being weekly earnings in the fourth year after assignment.
 We use 27 covariates, including demographic information as well as data on health, education, and employment.

We apply the QTT procedure described in Section~\ref{sec:qtt}, using the transformed outcome $\tilde{Y} = \log(1+Y)$. We use two-fold cross-fitting, where the training step applies the super learner, with details identical to the steps in the simulations. We apply the procedure with the grid from -0.5 to 0.5 with step size 0.01. Table~\ref{table:JC} reports the resulting confidence intervals at levels $\alpha = 0.05, 0.1, 0.2$, showing that, similar to the mean~\citep{liu2025multiplicativeinstrumentalvariablemodel}, the median of the weekly earnings is also positively affected by job training which is empirically found to have a beneficial effect.

\begin{table}[htbp]
\begin{center} 
\begin{tabular}{rlll}
\hline
Level & $\alpha=0.05$ & $\alpha=0.1$  & $\alpha=0.2$\\
\hline
Confidence interval & $[0.01,0.20]$ &  $[0.01,0.17]$ & $[0.03,0.13]$\\
\hline
\end{tabular} 
\end{center}
\caption{Results for the Job Corps dataset: confidence intervals for the median treatment effect on the treated from procedure~\ref{alg:functional}, at levels $\alpha=0.05,0.1$ and $0.2$.}
\label{table:JC}
\end{table}

\section{Discussion}

In this work, we discussed methods for inference on nonlinear counterfactual functionals under the multiplicative IV model. Our methods provide efficient and multiply robust estimators, along with asymptotically valid inference procedures. Experimental results for inference on the counterfactual median demonstrate that the proposed method performs well even with moderate sample sizes.

Given the broad applicability of the multiplicative IV model, many important questions remain open. For example, can the results be extended to settings where the instrumental variable or the treatment variable is non-binary, or even continuous? Developing methods with finite-sample validity is also an important potential direction, and we leave these questions to future work.


\bibliographystyle{plainnat}
\bibliography{bib}

\newpage

\appendix

\section{Direct identification of quantile treatment effects}

In this section, we discuss an extension of the proposed method for inference on quantile treatment effects, where the corresponding moment function depends on unknown information. Suppose we are interested in the following quantile treatment effect on the treated:
\begin{equation}\label{eqn:qte}
    \beta^* = F_{Y^{a=0} \mid A=1}^{-1}(q) - F_{Y^{a=1} \mid A=1}^{-1}(q),
\end{equation}
where $q \in (0,1)$ represents a predetermined quantile of interest. A straightforward approach is to conduct inference on $F_{Y^{a=0} \mid A=1}^{-1}(q)$ based on the discussion in Section~\ref{sec:functional}, separately carry out inference on $F_{Y^{a=1} \mid A=1}^{-1}(q)$ using the samples $\{Y_i : A_i = 1\}$, and then combine the results.

Here, we introduce an approach for direct inference on $\beta^*$ using its moment function. Let us define $\gamma^* = F_{Y^{a=1} \mid A=1}^{-1}(q)$, and let
\begin{equation}\label{eqn:moment_qte}
    M(Y,X,Z,A,\beta) = \left(\One{Y \leq \gamma^*+\beta(1-A)} - q\right)\cdot\left(\frac{2Z-1}{\pi_Z(X)}\cdot\frac{\rho(X)}{\delta^A(X)}\right)^{1-A}.
\end{equation}
The nuisance functions $\pi_Z$, $\rho$, and $\delta^A$ are defined as before, while $\gamma^*$ is viewed as a nuisance parameter in this context. We now prove the following identification result.

\begin{proposition}\label{prop:qte}
    The quantile treatment effect $\beta^*$ from~\eqref{eqn:qte} satisfies
    \[\EE{M(Y,X,Z,A,\beta^*)} = 0,\]
    where $M$ is defined as~\eqref{eqn:moment_qte}.
\end{proposition}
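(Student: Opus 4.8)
The plan is to split the target moment condition according to whether $A=0$ or $A=1$, since the function $M$ in~\eqref{eqn:moment_qte} is designed so that the exponent $(1-A)$ makes the IV-reweighting factor active only on the untreated units, while on the treated it reduces to the simple indicator $\One{Y \leq \gamma^*} - q$. Concretely, write
\[
\EE{M(Y,X,Z,A,\beta^*)} = \EE{\One{Y \leq \gamma^*}-q \;;\; A=1} + \EE{\left(\One{Y \leq \gamma^*+\beta^*}-q\right)\cdot\frac{2Z-1}{\pi_Z(X)}\cdot\frac{\rho(X)}{\delta^A(X)} \;;\; A=0}.
\]
For the first term, use consistency ($Y = Y^{a=1}$ on $\{A=1\}$) to rewrite it as $\PP{A=1}\cdot\left(\PPst{Y^{a=1}\leq\gamma^*}{A=1} - q\right)$, which vanishes by the very definition $\gamma^* = F_{Y^{a=1}\mid A=1}^{-1}(q)$.

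For the second term, the key observation is that it has exactly the structure of the identifying expression in Theorem~\ref{thm:identification}. Indeed, consider the scalar moment function $\widetilde M(Y^{a=0},\beta) = \One{Y^{a=0}\leq\gamma^*+\beta} - q$; then $\PPst{Y^{a=0}\leq\gamma^*+\beta^*}{A=1} - q = \EEst{\widetilde M(Y^{a=0},\beta^*)}{A=1}$, and by the definition of $\beta^*$ in~\eqref{eqn:qte} together with $\gamma^* = F_{Y^{a=1}\mid A=1}^{-1}(q)$ we have $\gamma^*+\beta^* = F_{Y^{a=0}\mid A=1}^{-1}(q)$, so this equals $0$. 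The plan is then to show that the second term above is precisely $\PP{A=1}$ times $\EEst{\widetilde M(Y^{a=0},\beta^*)}{A=1}$, or rather its negative --- matching the sign structure $-\EE{\frac{\delta^{\widetilde M,A}(\beta,X)}{\delta^A(X)}\cdot\frac{\PPst{A=1}{X}}{\PP{A=1}}}$ from Theorem~\ref{thm:identification}. To do this I would expand $\EE{g(Y,A)\cdot\frac{2Z-1}{\pi_Z(X)}\;;\;A=0}$ for a generic bounded $g$ by first conditioning on $X$: the inner expectation over $(Z,Y,A)$ given $X$ is $\sum_{z\in\{0,1\}}\PPst{Z=z}{X}\cdot\frac{2z-1}{\pi_z(X)}\cdot\EEst{g(Y,0)\One{A=0}}{Z=z,X}$, and since $\PPst{Z=1}{X}=\pi_1(X)$ and $\PPst{Z=0}{X}=\pi_0(X)=1-\pi_1(X)$ the weights $\PPst{Z=z}{X}/\pi_z(X)$ collapse to $1$, leaving $\EEst{g(Y,0)\One{A=0}}{Z=1,X} - \EEst{g(Y,0)\One{A=0}}{Z=0,X}$ --- exactly a $\delta$-type contrast as in~\eqref{eqn:nui_delta}. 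Applied with $g(Y,0) = \One{Y\leq\gamma^*+\beta^*}-q$ this produces $\delta^{\widetilde M,A}(\beta^*,X)$ (noting $g(Y,0)(1-A) = g(Y,0)$ on the event being integrated, since the factor $\One{A=0}$ is already present), and the remaining factor $\rho(X)/\delta^A(X)$ together with the outer expectation over $X$ reproduces $\EE{\frac{\delta^{\widetilde M,A}(\beta^*,X)}{\delta^A(X)}\cdot\rho(X)} = \PP{A=1}\cdot\EE{\frac{\delta^{\widetilde M,A}(\beta^*,X)}{\delta^A(X)}\cdot\frac{\rho(X)}{\PP{A=1}}}$. By Theorem~\ref{thm:identification} this equals $-\PP{A=1}\cdot\EEst{\widetilde M(Y^{a=0},\beta^*)}{A=1} = 0$.

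The main obstacle --- really the only point requiring care --- is bookkeeping of signs and of the indicator $\One{A=0}$ versus the $(1-A)$ factor appearing in $\delta^{M,A}$ of~\eqref{eqn:nui_delta}: on the set $\{A=0\}$ one has $(1-A)=1$, so the reweighted integrand $g(Y,0)\One{A=0}$ coincides with $g(Y,\beta)(1-A)\One{A=0}$ and the contrast genuinely matches $\delta^{\widetilde M,A}$; I would state this identification of integrands explicitly. A secondary, more cosmetic subtlety is that Theorem~\ref{thm:identification} is stated for a moment function $M(Y^{a=0},\beta)$ not depending on extra nuisance quantities, whereas here the "effective" full-data moment $\widetilde M(Y^{a=0},\beta) = \One{Y^{a=0}\leq\gamma^*+\beta}-q$ depends on $\gamma^*$; but since $\gamma^*$ is a fixed (population) constant, $\widetilde M$ is a legitimate measurable moment function for each fixed $q$, so Theorem~\ref{thm:identification} applies verbatim. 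Assembling the two terms gives $\EE{M(Y,X,Z,A,\beta^*)} = 0$, completing the proof.
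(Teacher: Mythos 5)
Your proposal is correct and follows essentially the same route as the paper's proof: split the moment by treatment status, kill the $A=1$ term via the definition of $\gamma^*$ and consistency, and recognize the $A=0$ term as the identifying functional of Theorem~\ref{thm:identification} applied to the shifted moment $\widetilde M(Y^{a=0},\beta)=\One{Y^{a=0}\leq\gamma^*+\beta}-q$, which vanishes at $\beta^*$ because $\gamma^*+\beta^*=F_{Y^{a=0}\mid A=1}^{-1}(q)$. The conditioning-on-$X$ expansion that matches the weighted $A=0$ expectation to $\EE{\delta^{\widetilde M,A}(\beta^*,X)\rho(X)/\delta^A(X)}$ is exactly the computation the paper performs.
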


\section{Additional proofs}\label{sec:proofs}

\subsection{Proof of Theorem~\ref{thm:identification}}
The second equality holds since
\begin{multline*}
\EE{\frac{\delta^{M,A}(\beta,X)}{\delta^A(X)} \cdot \frac{\PPst{A=1}{X}}{\PP{A=1}}} = \EE{\frac{\delta^{M,A}(\beta,X)}{\delta^A(X)} \cdot \frac{\One{A=1}}{\PP{A=1}}}\\
= \sum_{a \in \{0,1\}} \EEst{\frac{\delta^{M,A}(\beta,X)}{\delta^A(X)} \cdot \frac{\One{A=1}}{\PP{A=1}}}{A=a} \cdot \PP{A=a} = \EEst{\frac{\delta^{M,A}(\beta,X)}{\delta^A(X)} \cdot \One{A=1}}{A=1}\\
= \EEst{\frac{\delta^{M,A}(\beta,X)}{\delta^A(X)}}{A=1}.
\end{multline*}
Now we prove the first equality. First observe that $M(Y ,\beta)(1-A) = M(Y^{a=0} ,\beta)(1-A)$ holds almost surely by the consistency assumption. Therefore,
\begin{align*}
    &\EEst{M(Y,\beta) (1-A)}{Z=1,X}\\
    &= \EEst{M(Y^{a=0},\beta) (1-A)}{Z=1,X}\\
    &= \EEst{\EEst{M(Y^{a=0},\beta) \One{A=0}}{Z=1,A,U,X}}{Z=1,X}\\
    &= \EEst{\EEst{M(Y^{a=0},\beta)}{Z=1,A=0,U,X} \PPst{A=0}{Z=1,U,X}}{Z=1,X}\\
    &= \EEst{\EEst{M(Y^{a=0},\beta)}{U,X} \PPst{A=0}{Z=1,U,X}}{Z=1,X} \qquad \textnormal{by weak ignorability in Assumption~\ref{asm:iv}}\\
    &= \EEst{\EEst{M(Y^{a=0},\beta)}{U,X} \PPst{A=0}{Z=1, U,X}}{X} \qquad \textnormal{ since } U \indep Z \mid X.
\end{align*}
It follows that
\begin{align*}
    &\delta^{M,A}(\beta,X)\\
    &= \EEst{\EEst{M(Y^{a=0},\beta)}{U,X} \cdot ( \PPst{A=0}{Z=1, U,X} - \PPst{A=0}{Z=0, U,X})}{X}\\
    &= \EEst{\EEst{M(Y^{a=0},\beta)}{U,X} \cdot ( \PPst{A=1}{Z=0, U,X} - \PPst{A=1}{Z=1, U,X})}{X}\\
    &= \EEst{\EEst{M(Y^{a=0},\beta)}{U,X} \cdot \left(1 -  \frac{\PPst{A=1}{Z=1, U,X}}{\PPst{A=1}{Z=0, U,X}}\right)\cdot \PPst{A=1}{Z=0, U,X}}{X}\\
    &= \EEst{\EEst{M(Y^{a=0},\beta)}{U,X} \cdot \left(1 -  \frac{\PPst{A=1}{Z=1,X}}{\PPst{A=1}{Z=0,X}}\right)\cdot \PPst{A=1}{Z=0, U,X}}{X} \quad \text{ by Assumption~\ref{asm:multi_iv}}\\
    &=\EEst{\EEst{M(Y^{a=0},\beta)}{U,X} \cdot \PPst{A=1}{Z=0, U,X}}{X} \cdot \left(1 -  \frac{\PPst{A=1}{Z=1,X}}{\PPst{A=1}{Z=0,X}}\right).\\
\end{align*}
Note that in the fourth equality, we apply the following result, which follows from Assumption~\ref{asm:multi_iv}.
\begin{multline*}
    \frac{\PPst{A=1}{Z=1,X}}{\PPst{A=1}{Z=0,X}} = \frac{\EEst{\PPst{A=1}{Z=1,U,X}}{Z=1,X}}{\EEst{\PPst{A=1}{Z=0,U,X}}{Z=0,X}} = \frac{\EEst{g_1(1,X)g_2(U,X)}{Z=1,X}}{\EEst{g_1(0,X)g_2(U,X)}{Z=0,X}}\\
    = \frac{g_1(1,X) \cdot \EEst{g_2(U,X)}{Z=1,X}}{g_1(0,X) \cdot \EEst{g_2(U,X)}{Z=0,X}} = \frac{g_1(1,X)}{g_1(0,X)} = \frac{\PPst{A=1}{Z=1, U,X}}{\PPst{A=1}{Z=0, U,X}},
\end{multline*}
where the second-to-last step holds since $U \indep Z \mid X$.

Next, we prove the following conditional independence:
\begin{equation}\label{eqn:cond_indep}
    U \indep Z \mid A=1, X.
\end{equation}
Let $S$ be any measureable subset of $\mathcal{U}.$ For $z \in \{0,1\}$, we have
\begin{align*}
    &\PPst{U \in S}{A=1,Z=z,X} = \frac{\PPst{U \in S, A=1, Z=z}{X}}{\PPst{A=1,Z=z}{X}} = \frac{\EEst{\One{U \in S}\cdot\PPst{A=1,Z=z}{U,X}}{X}}{\EEst{\PPst{A=1,Z=z}{U,X}}{X}}\\
    &= \frac{\EEst{\One{U \in S}\cdot\PPst{A=1}{Z=z,U,X}\cdot \PPst{Z=z}{U,X}}{X}}{\EEst{\PPst{A=1}{Z=z,U,X}\cdot \PPst{Z=z}{U,X}}{X}}\\
    &= \frac{\EEst{\One{U \in S}\cdot\PPst{A=1}{Z=z,U,X}\cdot \PPst{Z=z}{X}}{X}}{\EEst{\PPst{A=1}{Z=z,U,X}\cdot \PPst{Z=z}{X}}{X}}\quad\textnormal{ since } Z \indep U \mid X\\
    &=\frac{\EEst{\One{U \in S}\cdot g_1(z,X)g_2(U,X) \cdot \PPst{Z=z}{X}}{X}}{\EEst{g_1(z,X)g_2(U,X)\cdot \PPst{Z=z}{X}}{X}}\quad\textnormal{ by Assumption~\ref{asm:multi_iv}}\\
    &= \frac{\EEst{\One{U \in S}\cdot g_2(U,X)}{X} \cdot g_1(z,X) \cdot \PPst{Z=z}{X}}{\EEst{g_2(U,X)}{X}\cdot g_1(z,X) \cdot \PPst{Z=z}{X}} = \frac{\EEst{\One{U \in S}\cdot g_2(U,X)}{X}}{\EEst{g_2(U,X)}{X}}.
\end{align*}
Since the resulting term does not depend on $z$, this implies the conditional independence~\eqref{eqn:cond_indep}. It follows that
\begin{align*}
    &\EEst{\EEst{M(Y^{a=0},\beta)}{U,X} \cdot \PPst{A=1}{Z=0, U,X}}{X}\\
    &=\EEst{\EEst{M(Y^{a=0},\beta)}{A=1,U,X} \cdot \PPst{A=1}{Z=0, U,X}}{X} \qquad \textnormal{ since } Y^{a=0} \indep A \mid U,X\\
    &=\EEst{\EEst{M(Y^{a=0},\beta)}{A=1,U,X} \cdot \PPst{Z=0}{A=1, U,X}\cdot\frac{\PPst{A=1}{U,X}}{\PPst{Z=0}{U,X}}}{X}\\
    &=\EEst{\EEst{M(Y^{a=0},\beta)}{A=1,U,X} \cdot \PPst{Z=0}{A=1, X}\cdot\frac{\PPst{A=1}{U,X}}{\PPst{Z=0}{X}}}{X}\quad\textnormal{ by~\eqref{eqn:cond_indep} and $Z \indep U \mid X$}\\
    &=\EEst{\EEst{M(Y^{a=0},\beta)}{A=1,U,X} \cdot \PPst{A=1}{Z=0, X}\cdot\frac{\PPst{Z=0}{X}}{\PPst{A=1}{X}}\cdot\frac{\PPst{A=1}{U,X}}{\PPst{Z=0}{X}}}{X}\\
    &= \EEst{\EEst{M(Y^{a=0},\beta)}{A=1,U,X} \cdot \frac{\PPst{A=1}{U,X}}{\PPst{A=1}{X}}}{X} \cdot \PPst{A=1}{Z=0, X}\\
    &= \EEst{\EEst{M(Y^{a=0},\beta)}{A=1,U,X} \cdot \frac{\One{A=1}}{\PPst{A=1}{X}}}{X} \cdot \PPst{A=1}{Z=0, X}\\
    &= \EEst{\EEst{M(Y^{a=0},\beta)}{A=1,U,X}}{A=1,X} \cdot \PPst{A=1}{Z=0, X}\\
    &=\EEst{M(Y^{a=0},\beta)}{A=1,X} \cdot \PPst{A=1}{Z=0, X}.
\end{align*}
Therefore, putting everything together, we obtain
\begin{align*}
    \delta^{M,A}(\beta,X) &= \EEst{M(Y^{a=0},\beta)}{A=1,X} \cdot \PPst{A=1}{Z=0, X} \cdot \left(1 -  \frac{\PPst{A=1}{Z=1,X}}{\PPst{A=1}{Z=0,X}}\right)\\
    &= -\EEst{M(Y^{a=0},\beta)}{A=1,X} \cdot \delta^A(X),
\end{align*}
and thus
\[-\EEst{\frac{\delta^{M,A}(\beta,X)}{\delta^A(X)}}{A=1} = \EEst{ \EEst{M(Y^{a=0},\beta)}{A=1,X}}{A=1} = \EEst{M(Y^{a=0},\beta)}{A=1}.\]

\subsection{Proof of Theorem~\ref{thm:eif}}

Under a parametric submodel $P_t=f(Y,A,Z,X;t)$, we compute the derivative of $h(\beta,P_t)$ as follows.

\begin{multline*}
    \pt h(\beta, P_t) \bigg\vert_{t=0} = \EE{\pt \delta_t^{M,A}(\beta,X)\cdot\frac{\Ppst{t}{A=1}{X}}{\delta_t^A(X)}} + \EE{\delta_t^{M,A}(\beta,X)\cdot\pt\left(\frac{\Ppst{t}{A=1}{X}}{\delta_t^A(X)}\right)}\\
    +\EE{\delta_t^{M,A}(\beta,X)\cdot\frac{\Ppst{t}{A=1}{X}}{\delta_t^A(X)}\cdot S(X;t)} \bigg\vert_{t=0}.
\end{multline*}
First, for $z \in \{0,1\}$, we compute
\begin{align*}
    &\pt \Epst{t}{M( Y,\beta) (1-A)}{Z=z,X} \bigg\vert_{t=0} = \EEst{M( Y,\beta) (1-A) \cdot S(Y,A \mid Z=z, X)}{Z=z,X}\\
    &= \EEst{\big(M( Y,\beta) (1-A) - \EEst{M( Y,\beta) (1-A)}{Z=z,X}\big) \cdot S(Y,A \mid Z=z, X)}{Z=z,X}.
\end{align*}
It follows that
\begin{align*}
    &\pt \delta_t^{M,A}(\beta,X) \bigg\vert_{t=0}\\
    &= \sum_{z=0,1} \EEst{\big(M( Y,\beta) (1-A) - \EEst{M( Y,\beta) (1-A)}{Z=z,X}\big) \cdot S(Y,A \mid Z=z, X)}{Z=z,X} \cdot (2z-1)\\
    &= \EEst{\big(M( Y,\beta) (1-A) - \EEst{M( Y,\beta) (1-A)}{Z,X}\big)\cdot\frac{2Z-1}{\pi_Z(X)}\cdot S(Y,A \mid Z, X)}{X}\\
    &=\EEst{\big(M( Y,\beta) (1-A) - \EEst{M( Y,\beta) (1-A)}{Z,X}\big)\cdot\frac{2Z-1}{\pi_Z(X)}\cdot S(Y,A,Z,X)}{X}.
\end{align*}
For the second term, we have
\begin{align*}
    \pt\left(\frac{\Ppst{t}{A=1}{X}}{\delta_t^A(X)}\right) = \frac{1}{\delta_t^A(X)}\pt \Ppst{t}{A=1}{X} - \frac{\Ppst{t}{A=1}{X}}{\delta_t^A(X)^2}\cdot \pt \delta_t^A(X).
\end{align*}
We then compute
\begin{multline*}
    \pt\Ppst{t}{A=1}{X}\bigg\vert_{t=0} = \EEst{A \cdot S(A \mid X)}{X} = \EEst{(A-\PPst{A=1}{X})\cdot S(A \mid X)}{X}\\
    = \EEst{(A-\PPst{A=1}{X})\cdot S(A, X)}{X} = \EEst{(A-\PPst{A=1}{X})\cdot S(Y,A,Z,X)}{X}.
\end{multline*}
Next,
\begin{multline*}
    \pt \Ppst{t}{A=1}{Z=z, X}\bigg\vert_{t=0} = \pt \EEst{A}{Z=z,X}\bigg\vert_{t=0} = \EEst{A \cdot S(A \mid Z=z, X)}{X}\\
    =\EEst{(A-\EEst{A}{Z=z,X}) \cdot S(A \mid Z=z, X)}{X} = \EEst{(A-\EEst{A}{Z=z,X}) \cdot S(A , Z=z \mid X)}{X}.
\end{multline*}
It follows that
\begin{equation}\label{eqn:dt_delta_A}
\begin{split}
    \pt \delta_t^A(X)\bigg\vert_{t=0} &= \pt \Ppst{t}{A=1}{Z=1, X} - \pt \Ppst{t}{A=1}{Z=0, X}\bigg\vert_{t=0}\\
    &=\EEst{(A - \PPst{A=1}{Z,X})\cdot\frac{2Z-1}{\pi_Z(X)}\cdot S(A,Z \mid X)}{X}\\
    &= \EEst{(A - \PPst{A=1}{Z,X})\cdot\frac{2Z-1}{\pi_Z(X)}\cdot S(Y,A,Z,X)}{X}.
\end{split}
\end{equation}
Therefore,
\begin{multline*}
    \pt\left(\frac{\Ppst{t}{A=1}{X}}{\delta^A(X)}\right) \bigg\vert_{t=0}\\
    = \EEst{\left(\frac{(A-\PPst{A=1}{X})}{\delta^A(X)}-(A - \PPst{A=1}{Z,X})\cdot\frac{\PPst{A=1}{X}}{\delta^A(X)^2}\cdot\frac{2Z-1}{\pi_Z(X)}\right)\cdot S(Y,A,Z,X)}{X}
\end{multline*}
Next, we can write the third term as
\begin{multline*}
\EE{\delta^{M,A}(\beta,X)\cdot\frac{\PPst{A=1}{X}}{\delta^A(X)}\cdot S(X)} = \EE{\left(\delta^{M,A}(\beta,X)\cdot\frac{\PPst{A=1}{X}}{\delta^A(X)}-h(\beta,P)\right)\cdot S(X)}\\
= \EE{\left(\delta^{M,A}(\beta,X)\cdot\frac{\PPst{A=1}{X}}{\delta^A(X)}-h(\beta,P)\right)\cdot S(Y,A,Z,X)}.
\end{multline*}
Putting everything together, the efficient influence function is given as
\begin{align*}
    &\dot{h}(\beta,O,P)\\
    = &\big(M( Y,\beta) (1-A) - \EEst{M( Y,\beta) (1-A)}{Z,X}\big)\cdot\frac{\PPst{A=1}{X}}{\delta^A(X)}\cdot\frac{2Z-1}{\pi_Z(X)}\\
    &+\delta^{M,A}(\beta,X) \cdot\left(\frac{A-\PPst{A=1}{X}}{\delta^A(X)}-(A - \PPst{A=1}{Z,X})\cdot\frac{\PPst{A=1}{X}}{\delta^A(X)^2}\cdot\frac{2Z-1}{\pi_Z(X)}\right)\\
    &+\delta^{M,A}(\beta,X)\cdot\frac{\PPst{A=1}{X}}{\delta^A(X)}-h(\beta,P)\\
    = &\Big(M( Y,\beta) (1-A) - \EEst{M( Y,\beta) (1-A)}{Z,X} - (A - \PPst{A=1}{Z,X})\cdot\delta(\beta,X)\Big)\cdot\frac{\PPst{A=1}{X}}{\delta^A(X)}\cdot\frac{2Z-1}{\pi_Z(X)}\\
    &+ (\delta(\beta,X)\cdot A - h(\beta,P)).
\end{align*}
Now we prove the second claim. The proof follows the steps similar to the arguments in~\citet{levis2024nonparametric}. 
We have
\begin{align*}
    &h(\beta,\bar{P}) - h(\beta,P) + \Ep{P}{\dot{h}(\beta,O,\bar{P})}\\
    &= h(\beta,\bar{P}) - \Ep{P}{\delta \cdot \rho}+ \Ep{P}{\bar{\delta}A - h(\beta,\bar{P}) + \frac{\bar{\rho}}{\bar{\delta}^A}\cdot \frac{2Z-1}{\bar{\pi}_Z} \cdot \left(M(Y,\beta)(1-A) - \bar{\mu}_Z - (A - \bar{\lambda}_Z)\cdot \bar{\delta}\right)}\\
    &=\Ep{P}{\bar{\delta}A - \delta \rho + \frac{\bar{\rho}}{\bar{\delta}^A}\cdot \frac{2Z-1}{\bar{\pi}_Z} \cdot \left(M(Y,\beta)(1-A) - \bar{\mu}_Z - (A - \bar{\lambda}_Z)\cdot \bar{\delta}\right)}\\
    &=\Ep{P}{(\bar{\delta} - \delta)\rho + \frac{\bar{\rho}}{\bar{\delta}^A}\cdot \frac{2Z-1}{\bar{\pi}_Z} \cdot \left(\mu_Z - \bar{\mu}_Z - (\lambda_Z - \bar{\lambda_Z})\cdot \bar{\delta}\right)} \qquad\textnormal{by conditioning on $(Z,X)$}\\
    &=\Ep{P}{(\bar{\delta} - \delta)\rho +\frac{\bar{\rho}}{\bar{\delta}^A}\cdot \frac{\pi_1}{\bar{\pi}_1}\cdot (\mu_1 - \bar{\mu}_1 - (\lambda_1 - \bar{\lambda}_1)\bar{\delta}) - \frac{\bar{\rho}}{\bar{\delta}^A}\cdot \frac{\pi_0}{\bar{\pi}_0}\cdot (\mu_0 - \bar{\mu}_0 - (\lambda_0 - \bar{\lambda}_0)\bar{\delta})} \quad\textnormal{by conditioning on $X$}\\
    &=\Ep{P}{(\bar{\delta} - \delta)(\rho-\bar{\rho}) + (\bar{\delta} - \delta)\bar{\rho}+\frac{\bar{\rho}}{\bar{\delta}^A}\cdot \frac{\pi_1}{\bar{\pi}_1}\cdot (\mu_1 - \bar{\mu}_1 - (\lambda_1 - \bar{\lambda}_1)\bar{\delta}) - \frac{\bar{\rho}}{\bar{\delta}^A}\cdot \frac{\pi_0}{\bar{\pi}_0}\cdot (\mu_0 - \bar{\mu}_0 - (\lambda_0 - \bar{\lambda}_0)\bar{\delta})}\\
    &=\Ep{P}{(\bar{\delta} - \delta)(\rho-\bar{\rho}) + \frac{\bar{\rho}}{\bar{\delta}^A} \cdot \tau},
\end{align*}
where
\[\tau = \bar{\delta}^A (\bar{\delta}-\delta) + \frac{\pi_1}{\bar{\pi}_1}\cdot (\mu_1 - \bar{\mu}_1 - (\lambda_1 - \bar{\lambda}_1)\bar{\delta}) - \frac{\pi_0}{\bar{\pi}_0}\cdot (\mu_0 - \bar{\mu}_0 - (\lambda_0 - \bar{\lambda}_0)\bar{\delta}).\]
The function $\tau$ can be simplified as follows. We first decompose $\bar{\delta}^A(\bar{\delta} - \delta)$ as
\begin{align*}
    \bar{\delta}^A(\bar{\delta} - \delta) = \bar{\delta}^{M,A} - \bar{\delta}^A \cdot \frac{\delta^{M,A}}{\delta^A} &= (\bar{\delta}^A - \delta^A)\left(\frac{\bar{\delta}^{M,A}}{\bar{\delta}^A} - \frac{\delta^{M,A}}{\delta^A}\right)+\bar{\delta}^{M,A} - \delta^{M,A} - \frac{\bar{\delta}^{M,A}}{\bar{\delta}^A}(\bar{\delta}^A - \delta^A)\\
    &= (\bar{\delta}^A - \delta^A)(\bar{\delta} - \delta)+(\bar{\mu}_1-\mu_1)-(\bar{\mu}_0-\mu_0)- \bar{\delta}((\bar{\lambda}_1 - \lambda_1) - (\bar{\lambda}_0 - \lambda_0)).
\end{align*}
It follows that
\begin{align*}
    \tau &= (\bar{\delta}^A - \delta^A)(\bar{\delta} - \delta) + (\bar{\mu}_1-\mu_1) \left(1-\frac{\pi_1}{\bar{\pi}_1}\right) - (\bar{\mu}_0-\mu_0)\left(1-\frac{\pi_0}{\bar{\pi}_0}\right) - \bar{\delta}(\bar{\lambda}_1 - \lambda_1)\left(1-\frac{\pi_1}{\bar{\pi}_1}\right) + \bar{\delta}(\bar{\lambda}_0 - \lambda_0)\left(1-\frac{\pi_0}{\bar{\pi}_0}\right)\\
    &= ((\bar{\lambda}_1 - \lambda_1) - (\bar{\lambda}_0 - \lambda_0))(\bar{\delta} - \delta) + \frac{1}{\bar{\pi}_1}\left(\bar{\mu}_1-\mu_1 - \bar{\delta}(\bar{\lambda}_1 - \lambda_1)\right)(\bar{\pi}_1-\pi_1)-\frac{1}{\bar{\pi}_0}\left(\bar{\mu}_0-\mu_0 - \bar{\delta}(\bar{\lambda}_0 - \lambda_0)\right)(\bar{\pi}_0-\pi_0)\\
    &= (\bar{\delta}^A - \delta^A)(\bar{\delta} - \delta) + (\bar{\pi}_1-\pi_1)\left(\frac{\bar{\mu}_1-\mu_1 - \bar{\delta}(\bar{\lambda}_1 - \lambda_1)}{\bar{\pi}_1} + \frac{\bar{\mu}_0-\mu_0 - \bar{\delta}(\bar{\lambda}_0 - \lambda_0)}{1-\bar{\pi}_1}\right).
\end{align*}
Putting everything together, we obtain
\begin{multline*}
    h(\beta,\bar{P}) - h(\beta,P) + \Ep{P}{\dot{h}(\beta,O,\bar{P})}  = \mathbb{E}_{P}\bigg[(\bar{\delta} - \delta)\left(\frac{\bar{\rho}}{\bar{\delta}^A}(\bar{\delta}^A - \delta^A) - (\bar{\rho}-\rho)\right)\\
    + \frac{\bar{\rho}}{\bar{\delta}^A}(\bar{\pi}_1-\pi_1)\left(\frac{\bar{\mu}_1-\mu_1 - \bar{\delta}(\bar{\lambda}_1 - \lambda_1)}{\bar{\pi}_1} + \frac{\bar{\mu}_0-\mu_0 - \bar{\delta}(\bar{\lambda}_0 - \lambda_0)}{1-\bar{\pi}_1}\right)\bigg],
\end{multline*}
which proves the claim.

\subsection{Proof of Corollary~\ref{cor:mult_robust}}
The first and the third condition follow directly from the formula of $R(\bar{P}, P)$ given in Theorem~\ref{thm:eif}. The second condition follows from the following observation:

\begin{align*}
    &(\bar{\mu}_1-\mu_1) - \bar{\delta}(\bar{\lambda}_1 - \lambda_1) = (\bar{\mu}_1-\mu_1) - \bar{\delta}(\bar{\lambda}_1 - \bar{\lambda}_0 + \bar{\lambda}_0 - \lambda_1) = (\bar{\mu}_1-\mu_1) - (\bar{\mu}_1-\bar{\mu}_0) + \bar{\delta}(\lambda_1-\bar{\lambda}_0)\\
    &= - \mu_1 + \bar{\mu}_0 + (\bar{\delta} - \delta)(\lambda_1-\bar{\lambda}_0) + \delta(\lambda_1-\lambda_0 + \lambda_0 - \bar{\lambda}_0) = - \mu_1 + \bar{\mu}_0 + (\bar{\delta} - \delta)(\lambda_1-\bar{\lambda}_0) + \mu_1 - \mu_0 + \delta( \lambda_0 - \bar{\lambda}_0)\\
    &= \bar{\mu}_0-\mu_0 + (\bar{\delta} - \delta)(\lambda_1-\bar{\lambda}_0) + \delta( \lambda_0 - \bar{\lambda}_0),
\end{align*}
which implies that $R(\bar{P},P) = 0$ holds if $\bar{\delta} = \delta$, $\bar{\mu}_0 = \mu_0$, and $\bar{\lambda}_0 = \lambda_0$.

\subsection{Proof of Corollary~\ref{cor:asymp}}

The proof applies the idea from~\citet{kennedy2016semiparametric}. We first show that the following decomposition holds.

\[\hat{h}(\beta) - h(\beta,P) = \mathbb{P}_n(\dot{h}(\beta,O,P)) + (\mathbb{P}_n - P) \big(\dot{h}(\beta,O,\hat{P})-\dot{h}(\beta,O,P)\big) + R(\hat{P},P).\]
Here, we abuse the notation $P$ to also represent the expectation with respect to the distribution $P$. Since $\Ep{P}{\dot{h}(\beta,O,P)} = 0$, the right hand side is equal to
\begin{align*}
    &\mathbb{P}_n(\dot{h}(\beta,O,P)) + \mathbb{P}_n \big(\dot{h}(\beta,O,\hat{P})-\dot{h}(\beta,O,P)\big) - \Ep{P}{\dot{h}(\beta,O,\hat{P})} + R(\hat{P},P)\\
    &=\mathbb{P}_n(\dot{h}(\beta,O,P)) + \mathbb{P}_n \big(\dot{h}(\beta,O,\hat{P})-\dot{h}(\beta,O,P)\big) + h(\beta,\hat{P}) - h(\beta,P) \qquad \textnormal{ by definition of $R(\hat{P},P)$}\\
    &= \mathbb{P}_n \big(\dot{h}(\beta,O,\hat{P})\big)+ h(\beta,\hat{P}) - h(\beta,P)\\
    &= \hat{h}(\beta) - h(\beta,P) \qquad \textnormal{ by definition of $\hat{h}(\beta)$},
\end{align*}
indicating that the decomposition holds. If $\|h(\beta,\hat{P}) - h(\beta,P)\| = o_P(1)$, then we have $(\mathbb{P}_n - P) \big(\dot{h}(\beta,O,\hat{P})-\dot{h}(\beta,O,P)\big) = o_P(n^{-1/2})$ by applying the following result.

\begin{lemma}[\citet{kennedy2020sharp}]
Let $\mathbb{P}_n$ denote the empirical measure over $(O_1,\cdots,O_n)$ and let $\hat{f}$ be a function, constructed independently of $(O_1,\cdots,O_n)$. Then
\[(\mathbb{P}_n - P)(\hat{f}-f) = O_P\left(\frac{\|\hat{f}-f\|}{\sqrt{n}}\right).\]
\end{lemma}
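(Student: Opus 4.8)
The plan is to condition on the auxiliary data used to build $\hat f$ and then use nothing beyond a second-moment (Chebyshev) bound for an average of i.i.d.\ terms. Let $\mathcal{G}$ be the $\sigma$-algebra generated by the sample from which $\hat f$ was constructed; by hypothesis $O_1,\dots,O_n \iidsim P$ are independent of $\mathcal{G}$, while $\hat f$ is $\mathcal{G}$-measurable. Conditionally on $\mathcal{G}$ the function $g := \hat f - f$ is deterministic, so $(\mathbb{P}_n - P)(g) = \tfrac1n\sum_{i=1}^n\{g(O_i) - \mathbb{E}_P[g(O)]\}$ is a centered average of i.i.d.\ summands, and the whole argument reduces to bounding such an average.

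First I would record the two conditional moments. The conditional mean vanishes, $\mathbb{E}[(\mathbb{P}_n - P)(g)\mid\mathcal{G}] = 0$, and since the $O_i$ are i.i.d.\ the cross terms drop out, giving
\[
\mathbb{E}\big[\,\|(\mathbb{P}_n - P)(\hat f - f)\|^2 \,\big|\, \mathcal{G}\,\big] = \tfrac1n\,\mathbb{E}_P\big[\,\|g(O) - \mathbb{E}_P g(O)\|^2 \,\big|\, \mathcal{G}\,\big] \le \tfrac1n\,\mathbb{E}_P\big[\,\|g(O)\|^2 \,\big|\, \mathcal{G}\,\big] = \frac{\|\hat f - f\|^2}{n},
\]
where $\|\cdot\|$ on the far right is the $L^2(P)$ norm; the $\mathbb{R}^d$-valued case is identical once squares are read as squared Euclidean norms.

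Next I would convert this into the stated stochastic order via the conditional Markov inequality. On the event $\{\|\hat f - f\| > 0\}$, for every $M > 0$,
\[
\mathbb{P}\!\left(\frac{\sqrt n\,\|(\mathbb{P}_n - P)(\hat f - f)\|}{\|\hat f - f\|} > M \,\middle|\, \mathcal{G}\right) \le \frac{n}{M^2\,\|\hat f - f\|^2}\;\mathbb{E}\big[\,\|(\mathbb{P}_n - P)(\hat f - f)\|^2 \,\big|\, \mathcal{G}\,\big] \le \frac{1}{M^2},
\]
while on $\{\|\hat f - f\| = 0\}$ we have $\hat f = f$ $P$-a.s.\ given $\mathcal{G}$, so $g(O_i)=0$ a.s.\ and $(\mathbb{P}_n - P)(\hat f - f) = 0$. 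Taking expectation over $\mathcal{G}$ (tower property) removes the conditioning and yields $\mathbb{P}\big(\sqrt n\,\|(\mathbb{P}_n - P)(\hat f - f)\| > M\,\|\hat f - f\|\big) \le M^{-2}$, uniformly in $n$; since $M^{-2}\to 0$ as $M\to\infty$, this is precisely $(\mathbb{P}_n - P)(\hat f - f) = O_P(\|\hat f - f\|/\sqrt n)$.

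The only delicate point --- and the one I would be careful to state correctly --- is the meaning of the $O_P$ symbol when the normalization $\|\hat f - f\|/\sqrt n$ is itself a random quantity: the claim should be read as ``$\sqrt n\,(\mathbb{P}_n - P)(\hat f - f)/\|\hat f - f\|$ is bounded in probability,'' and the displayed tail bound gives exactly a uniform-in-$n$ control of that ratio, with the degenerate case handled separately. Beyond that the proof is entirely routine; what is worth emphasizing is that nothing more than a variance computation is used and, in particular, no Donsker or entropy condition on any class containing $\hat f$ is invoked --- that is precisely the payoff of the independence (sample-splitting) assumption, and it is the property the empirical-process remainder in Corollary~\ref{cor:asymp} relies on.
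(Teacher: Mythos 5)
The paper does not prove this lemma; it is imported verbatim from \citet{kennedy2020sharp} and used as a black box inside the proof of Corollary~\ref{cor:asymp}. Your argument --- conditioning on the training sample, computing the conditional variance $\|\hat f - f\|^2/n$, applying conditional Chebyshev, and removing the conditioning by the tower property, with the degenerate case $\|\hat f - f\| = 0$ handled separately --- is correct and is exactly the standard proof given in the cited reference, including the correct reading of the $O_P$ statement with a random normalizer. Nothing is missing.
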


Therefore, under the additional assumption of $R(\hat{P},P) = o_P(n^{-1/2})$, we have
\[\sqrt{n}(\hat{h}(\beta) - h(\beta,P)) = \sqrt{n}\cdot \mathbb{P}_n(\dot{h}(\beta,O,P)) + o_P(1), \]
and the claim is proved by applying the central limit theorem along with Slutsky's theorem.

\subsection{Proof of Theorem~\ref{thm:unif_conv}}
We first provide a brief review of the result from~\citet{chernozhukov2018double}, which establishes a uniform convergence result for the cross-fitting estimator with linear scores. We then apply their results to our estimator to complete the proof of Theorem~\ref{thm:unif_conv}.

\subsubsection{Review of the results from~\texorpdfstring{\citet{chernozhukov2018double}}{1}}\label{sec:cher}

Suppose we aim to learn a parameter $\theta^*$, which is the solution to the equation $\Ep{P}{\psi(O, \theta, \eta^*)} = 0$,
where $\eta^* \in \Omega \subset \R^d$ denotes the true nuisance function and $\psi$ is a score function of the form $\psi(o, \theta, \eta) = \phi(o, \eta) - \theta$\footnote{\citet{chernozhukov2018double} provides results for general linear scores of the form $\psi(o, \theta, \eta) = \psi_1(o, \eta) \theta + \psi_2(o, \eta)$. However, for conciseness, we restrict our discussion here to this special case, which is particularly relevant to our setting.}

Given dataset $(O_i)_{1 \leq i \leq n}$, let us consider a cross-fitting estimator $\hat{\theta}$, constructed as follows:
\begin{enumerate}
    \item Construct a partition $(I_k)_{k \in [K]}$ of $[n]$, where each fold has size $|I_k| = n/K$.
    
    \item For $k=1,2,\cdots,K$, repeat the following:
    
    - Using the data $(O_i)_{i \in I_k^c}$, construct the nuisance estimators $\hat{\eta}_{-k}$.

    - Construct the estimator $\hat{\theta}_k$ as the solution of $\Ep{I_k}{\psi(O, \theta, \hat{\eta}_{-k})} = 0$, where $\mathbb{E}_{I_k}$ denotes the sample mean over the fold $I_k$.

    \item Compute the cross-fitting estimator $\hat{\theta} = \frac{1}{K}\sum_{k=1}^K \hat{\theta}_k$.

\end{enumerate}
We assume the following conditions for the score function, in addition to its linearity.

\begin{assumption}\label{asm:score}
    For $P \in \mathcal{P}_n$, the following conditions hold:
    \begin{enumerate}
        \item The map $\eta \mapsto \Ep{P}{\psi(O,\theta,\eta)}$ is twice continuously Gateaux-differentiable
        \item The score $\psi$ satisfies the Neyman orthogonality, i.e., $\partial_\eta\, \Ep{P}{\psi(O,\theta^*,\eta^*)} [\eta-\eta_0] = 0$, for all $\eta \in \Omega$.
    \end{enumerate}
\end{assumption}

Next, we assume the following for the quality of nuisance estimators.

\begin{assumption}\label{asm:nuisance_reg}
 There exists a constant $c > 0$ and sequences $(P_n)_{n \in \mathbb{N}} \subset P$, $(\tau_n)_{n \in \mathbb{N}} \subset \mathbb{R}^+$, and $(\eps_n)_{n \in \mathbb{N}} \subset \mathbb{R}^+$ such that $\lim_{n \to \infty} \tau_n = \lim_{n \to \infty} \eps_n = 0$, and the following conditions hold: For all $P \in \mathcal{P}_n$ and each $k \in [K]$, the nuisance estimator $\hat{\eta}_{-k}$ belongs to a set $\Omega_n$ with probability at least $1 - \eps_n$, where $\Omega_n$ contains $\eta^*$ and satisfies:
\begin{align*}
    &\sup_{\eta \in \Omega_n} \Ep{P}{|\psi(O, \theta^*, \eta)|^2}^{1/2} \leq c, \\
    &\sup_{\eta \in \Omega_n} \Ep{P}{|\psi(O, \theta^*, \eta) - \psi(O, \theta^*, \eta^*)|^2}^{1/2} \leq \tau_n, \\
    &\sup_{\eta \in \Omega_n} \Big|\Ep{P}{\psi(O, \theta^*, \eta)} - \Ep{P}{\psi(O, \theta^*, \eta^*)} + \partial_r\Ep{P}{\psi(O, \theta^*, \eta^* + r(\eta - \eta^*))}\big\vert_{r=0} \Big|  \leq \frac{\tau_n}{\sqrt{n}}
\end{align*}
We note that~\citet{chernozhukov2018double} impose the following stronger condition in place of the third one:
\[\sup_{\substack{\eta \in \Omega_n \\ r \in (0, 1)}} \left| \partial_r^2 \Ep{P}{\psi(O, \theta^*, \eta^* + r(\eta - \eta^*))} \right| \leq \frac{\tau_n}{\sqrt{n}}.\]
The third condition in Assumption~\ref{asm:nuisance_reg} follows from this stronger condition via Taylor's theorem. In fact, what is essentially used in their proof is the weaker condition stated in Assumption~\ref{asm:nuisance_reg}---they simply apply Taylor’s theorem within their argument to reduce to it.

In addition, the variance of the score at the true nuisance function is non-degenerate, i.e.,
\[\Ep{P}{\psi(O, \theta^*, \eta^*) \psi(O, \theta^*, \eta^*)^\top} \succeq c' I\]
for a constant $c' > 0$.

\end{assumption}

Under the above conditions, the following lemma shows that the confidence interval from the normal approximation is asymptotically valid, uniformly over the neighborhood of the true nuisance function.

\begin{lemma}[\citet{chernozhukov2018double}]\label{lemu:unif_conv}
    Under Assumptions~\ref{asm:score} and~\ref{asm:nuisance_reg}, for a fixed vector $w \in \R^d$, the confidence interval
    \[\widehat{C}^w = \left[w^\top \hat{\theta} - z_{\alpha/2}\cdot \sqrt{\frac{w^\top \hat{\sigma}^2 w}{n}}, w^\top \hat{\theta} + z_{\alpha/2}\cdot \sqrt{\frac{w^\top \hat{\sigma}^2 w}{n}}\right]\]
    satisfies
    \[\lim_{n \rightarrow \infty} \sup_{P \in \mathcal{P}_n} \left|\Pp{P}{w^\top \theta^* \in \widehat{C}^w} - (1-\alpha)\right| = 0.\]
\end{lemma}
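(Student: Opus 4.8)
The plan is to establish a uniform linear (Bahadur-type) expansion for the cross-fitting estimator and then combine it with a triangular-array central limit theorem and uniform consistency of the variance estimator. Because the score is linear, $\psi(o,\theta,\eta)=\phi(o,\eta)-\theta$, the fold-wise estimator solving $\Ep{I_k}{\psi(O,\theta,\hat\eta_{-k})}=0$ is simply $\hat\theta_k=\Ep{I_k}{\phi(O,\hat\eta_{-k})}$, so that $\hat\theta_k-\theta^*=\Ep{I_k}{\psi(O,\theta^*,\hat\eta_{-k})}$. The goal of the first stage is to show, uniformly over $P\in\mathcal{P}_n$,
\[
\sqrt{n}\,(\hat\theta-\theta^*)=\sqrt{n}\,\Ep{n}{\psi(O,\theta^*,\eta^*)}+o_P(1),
\]
i.e.\ that the cross-fitting estimator is asymptotically equivalent to the infeasible oracle average of the score evaluated at the true nuisance.

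First I would fix a fold $k$ and, on the event $\{\hat\eta_{-k}\in\Omega_n\}$ (which has probability at least $1-\eps_n$), decompose
\[
\Ep{I_k}{\psi(O,\theta^*,\hat\eta_{-k})}=\Ep{I_k}{\psi(O,\theta^*,\eta^*)}+(\mathbb{E}_{I_k}-\mathbb{E}_P)\big[\psi(O,\theta^*,\hat\eta_{-k})-\psi(O,\theta^*,\eta^*)\big]+\Ep{P}{\psi(O,\theta^*,\hat\eta_{-k})},
\]
where $\mathbb{E}_P$ is taken conditionally on the out-of-fold data, so $\hat\eta_{-k}$ is treated as fixed. The first term is the oracle contribution. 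For the second (empirical-process) term, sample splitting is decisive: conditionally on the training data, $\hat\eta_{-k}$ is non-random and independent of $(O_i)_{i\in I_k}$, so this term is a conditionally mean-zero average whose conditional variance is at most $\Ep{P}{|\psi(O,\theta^*,\hat\eta_{-k})-\psi(O,\theta^*,\eta^*)|^2}\le\tau_n^2$ by the second display of Assumption~\ref{asm:nuisance_reg}; a conditional Chebyshev bound then yields $O_P(\tau_n/\sqrt{n})$. For the third (bias) term I would use that $\Ep{P}{\psi(O,\theta^*,\eta^*)}=0$ by definition of $\theta^*$, that Neyman orthogonality (Assumption~\ref{asm:score}) kills the first Gateaux derivative $\partial_r\Ep{P}{\psi(O,\theta^*,\eta^*+r(\hat\eta_{-k}-\eta^*))}\big|_{r=0}=0$, and that the third display of Assumption~\ref{asm:nuisance_reg} bounds the remaining Taylor remainder by $\tau_n/\sqrt{n}$; hence the bias is also $O_P(\tau_n/\sqrt{n})$. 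Summing over the $K$ folds and multiplying by $\sqrt{n}$ delivers the claimed expansion with remainder $O_P(\tau_n)=o_P(1)$, uniformly in $P$.

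Next I would invoke a uniform (Lyapunov/Berry--Esseen) central limit theorem for the triangular array $\{\psi(O_i,\theta^*,\eta^*)\}$: the uniform second-moment bound $\Ep{P}{|\psi|^2}^{1/2}\le c$ and the non-degeneracy $\Ep{P}{\psi\psi^\top}\succeq c'I$ (together with the bounded higher moment available under the maintained conditions) control the relevant Lyapunov ratio, so $\sqrt{n}\,w^\top\Ep{n}{\psi(O,\theta^*,\eta^*)}$ converges in distribution to $N(0,w^\top\Sigma w)$ with $\Sigma=\Ep{P}{\psi\psi^\top}$, at a rate uniform over $\mathcal{P}_n$. In parallel I would prove uniform consistency of the variance estimator $\hat\sigma^2\to_P\Sigma$: replacing $\hat\eta_{-k}$ by $\eta^*$ costs at most $O_P(\tau_n)$ by the second-moment bound, replacing $\hat\theta_k$ by $\theta^*$ is negligible by the expansion just proved, and the empirical-to-population passage is a uniform law of large numbers, with non-degeneracy guaranteeing $w^\top\hat\sigma^2 w$ is bounded away from zero. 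Combining the expansion, the uniform CLT, and Slutsky's theorem, the studentized statistic $\sqrt{n}\,w^\top(\hat\theta-\theta^*)/\sqrt{w^\top\hat\sigma^2 w}$ converges uniformly in distribution to $N(0,1)$; since the normal CDF is continuous, Polya's theorem upgrades this to uniform convergence of distribution functions, which is precisely the stated coverage conclusion.

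The main obstacle is the uniformity over the moving family $\mathcal{P}_n$ rather than a single fixed $P$. Ordinary pointwise arguments do not transfer, since the data-generating law changes with $n$ and one must therefore work with a triangular array instead of the classical CLT. The clean device is to pass to a nearly worst-case sequence $P_n\in\mathcal{P}_n$, chosen so that $\Pp{P_n}{w^\top\theta^*\in\widehat{C}^w}$ approaches the supremum, and to verify that every bound above---the conditional-variance control of the empirical-process term, the Taylor/orthogonality control of the bias, the Berry--Esseen rate, and the variance consistency---depends on $P$ only through the fixed constants $c,c',\tau_n,\eps_n$. This is exactly where the uniform moment and non-degeneracy conditions of Assumption~\ref{asm:nuisance_reg}, and the sample-splitting structure that reduces the empirical-process term to an elementary conditional second-moment calculation, do the essential work; assembling these uniform pieces is the heart of the argument, while the remaining steps are routine.
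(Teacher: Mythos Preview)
The paper does not supply its own proof of this lemma: it is quoted verbatim as a result of \citet{chernozhukov2018double} and then invoked as a black box in the proof of Theorem~\ref{thm:unif_conv}. Your sketch---fold-wise decomposition into oracle term, empirical-process term controlled by sample splitting and the $L^2$ rate $\tau_n$, and bias term killed to first order by Neyman orthogonality and to second order by the remainder bound $\tau_n/\sqrt{n}$, followed by a triangular-array CLT and uniform variance consistency---is exactly the argument in the cited reference, so there is nothing to compare against within this paper.
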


\subsubsection{Proof of the theorem}

Now, we prove Theorem~\ref{thm:unif_conv} by applying the results above. Observe that the steps for constructing our estimator and the confidence interval correspond to those in Section~\ref{sec:cher}, with the score function
\begin{align*}
    \psi(o, \theta, \eta) &= \delta(\beta,x)\cdot a + \frac{\rho(x)}{\delta^A(x)}\cdot\frac{2z-1}{\pi_z(x)}
    \cdot\Big(M(y,\beta) (1-a) - \mu_z(\beta,x) - (a - \lambda_z(x))\cdot\delta(\beta,x)\Big)-\theta\\
    &= \frac{\mu_1 - \mu_0}{\lambda_1 - \lambda_0}\cdot a + \frac{\rho}{\lambda_1 - \lambda_0}\cdot\left(\frac{z}{\pi_1} - \frac{1-z}{1-\pi_1}\right)\cdot\bigg(M(y,\beta) (1-a) - \mu_1 z - \mu_0 (1-z)\\
    &\hspace{85mm} -(a - \lambda_1 z - \lambda_0 (1-z))\cdot\frac{\mu_1 - \mu_0}{\lambda_1 - \lambda_0}\bigg) - \theta,
\end{align*}
where $\eta = (\rho,\pi_1,\mu_0,\mu_1, \lambda_0, \lambda_1)$, and we omit the variables $x$ and $\beta$ for simplicity. Therefore, it remains to verify that Assumptions~\ref{asm:score} and~\ref{asm:nuisance_reg} holds. 

Since Assumption~\ref{asm:score} holds clearly for the above efficient influence function-based score $\psi$, it is sufficient to check the conditions in Assumption~\ref{asm:nuisance_reg}, to apply Lemma~\ref{lemu:unif_conv}. First, the non-degeneracy follows directly from the second condition in Assumption~\ref{asm:regularity}. Next, let us define the space $\Omega_n$ as
\begin{multline*}
\Omega_n = \Big\{ \eta \in \Omega : \|\eta(X)\|_{\infty} \leq c_1 \text{ a.s., } \Ep{P}{\|\eta (X) - \eta^*(X)\|^2}^{1/2} \leq c_2 \cdot \tau_n, \\
\text{and the product biases of } \eta \text{ are bounded by } c_3 \cdot \tau_n / \sqrt{n} \Big\}.,
\end{multline*}
where the ``product biases" above refers to 
\begin{alignat*}{3}
    &\|(\delta - \delta^*)({\delta^A} - {\delta^A}^*)\|,\quad
    &&\|(\delta - \delta^*)(\rho - \rho^*)\|,\quad
    &&\|(\pi_1 - \pi_1^*)(\mu_1 - \mu_1^*)\|,\\
    &\|(\pi_1 - \pi_1^*)(\mu_0 - \mu_0^*)\|,\quad
    &&\|(\pi_1 - \pi_1^*)(\lambda_1 - \lambda_1^*)\|,\quad
    &&\|(\pi_1 - \pi_1^*)(\lambda_0 - \lambda_0^*)\|.
\end{alignat*}
By the first condition in Assumption~\ref{asm:regularity}, there exists a sequence $(\eps_n)_{n \in \N}$ converging to zero such that the nuisance estimator $\hat{\eta}_{-k}$ belongs to $\Omega_n$ with probability at least $1 - \eps_n$. Therefore, it remains to show that the uniform-bound conditions in Assumption~\ref{asm:nuisance_reg} hold. 

The first condition trivially holds from the condition $\|\eta(X)\|_\infty \leq c_1$ of $\Omega_n$, i.e., all the nuisance function components are bounded. The second condition follows from the two conditions $\|\eta(X)\|_\infty \leq c_1$ and $\Ep{P}{\|\eta (X) - \eta^*(X)\|^2}^{1/2} \leq c_2 \cdot \tau_n$. The third condition---the uniform bound on the second-order remainder---follows directly from the expression in Theorem~\ref{thm:eif}, along with the constraints defining $\Omega_n$, which bound the product biases and $|\eta(X)|_\infty$, and the third part of Assumption~\ref{asm:regularity}. Therefore, applying Lemma~\ref{lemu:unif_conv} completes the proof.


\subsection{Proof of Corollary~\ref{cor:CI_beta}}
For intuition, we first present the proof in the one-dimensional case, where the confidence interval is essentially given as~\eqref{eqn:CI_beta}. Recall that $h(\beta^*,P) = 0$ by the definition of $\beta^*$. Thus, we have
\begin{align*}
    \PP{\beta^* \in \chb} = \PP{0 \in \chh(\beta^*)} = \PP{h(\beta^*,P) \in \chh(\beta^*)},
\end{align*}
and the claim follows by applying Corollary~\ref{cor:moment_cov}.

The proof for the general case relies on the asymptotic normality result
\[\sqrt{n} \cdot \hat{\Sigma}^{-1/2} (\hat{h}(\beta) - h(\beta,P)) \stackrel{D}{\rightarrow} \mathcal{N}(0,I_d),\]
which is established in~\citet{chernozhukov2018double} and is also used in the proof of Theorem~\ref{thm:unif_conv}. This implies that at $\beta = \beta^*$, we have
\[n \|\hat{\Sigma}^{-1/2} \hat{h}(\beta^*)\|^2 \stackrel{D}{\rightarrow} \chi_d^2\]
by continuous mapping theorem, and the desired claim follows by the same argument as above.

\subsection{Proof of Corollary~\ref{cor:qtt}}
The proof applies the idea of~\citet{berger1994p}. Note that it is sufficient to prove that $\lim_{n \rightarrow \infty} \PP{p(\xi^*) > \alpha} \geq 1-\alpha$, since $p(\xi^*) > \alpha$ implies $\xi^* \in \widehat{C}_{\xi^*}$. Then compute
\begin{multline*}
    \PP{p(\xi^*) \leq \alpha} = \PP{p(\xi^*) < \alpha, \gamma^* \in \widehat{C}_{\gamma^*}} +\PP{p(\xi^*) \leq \alpha, \gamma^* \notin \widehat{C}_{\gamma^*}}\\
    \leq \PP{p(\xi^*;\gamma^*) + \zeta \leq \alpha, \gamma^* \notin \widehat{C}_{\gamma^*}} +\PP{\gamma^* \notin \widehat{C}_{\gamma^*}} \leq \PP{p(\xi^*;\gamma^*) \leq \alpha - \zeta} + \zeta.
\end{multline*}
Next, applying Corollary~\ref{cor:CI_beta}, we have $\lim_{n \rightarrow \infty} \PP{p(\xi^*;\gamma^*) \leq \alpha - \zeta} = \alpha - \zeta$. Putting these results together, we obtain
\[\lim_{n \rightarrow \infty} \PP{p(\xi^*) \leq \alpha} \leq \alpha,\]
which implies the desired inequality.

\subsection{Proof of Proposition~\ref{prop:qte}}

Observe that
\begin{align*}
    &\EE{M(Y,X,Z,A,\beta^*)} = \EE{\left(\One{Y \leq \gamma^*+\beta^*(1-A)} - q\right)\cdot\left(\frac{2Z-1}{\pi_Z(X)}\cdot\frac{\rho(X)}{\delta^A(X)}\right)^{1-A}}\\
    &= \EE{\left(\One{Y \leq \gamma^*+\beta^*(1-A)} - q\right) \cdot A + \left(\One{Y \leq \gamma^*+\beta^*(1-A)} - q\right) \cdot \left(\frac{2Z-1}{\pi_Z(X)}\cdot\frac{\rho(X)}{\delta^A(X)}\right) \cdot (1-A)}\\
    &= \EEst{\One{Y \leq \gamma^*} - q}{A=1}\cdot\PP{A=1}\\
    &\hspace{48mm}+ \EEst{\left(\One{Y \leq \gamma^*+\beta^*} - q\right) \cdot \left(\frac{2Z-1}{\pi_Z(X)}\cdot\frac{\rho(X)}{\delta^A(X)}\right)}{A=0}\cdot\PP{A=0},
\end{align*}
and thus it is sufficient to show that
\begin{equation}\label{eqn:qte_eq}
    \EEst{\left(\One{Y \leq \gamma^*+\beta^*} - q\right) \cdot \left(\frac{2Z-1}{\pi_Z(X)}\cdot\frac{\rho(X)}{\delta^A(X)}\right)}{A=0} = 0
\end{equation}
holds, since $\EEst{\One{Y \leq \gamma^*} - q}{A=1} = \EEst{\One{Y^{a=1} \leq \gamma^*} - q}{A=1} = 0$ by definition of $\gamma^*$.

Now, by definition of $\gamma^*+\beta^* = F_{Y^{a=0} \mid A=1}^{-1}(q)$ and Theorem~\ref{thm:identification}, we have
\begin{align*}
    0 = \EEst{\One{Y^{a=0} \leq \gamma^* + \beta^*}-q}{A=1} = -\EE{\frac{\delta^{\tilde{M},A}(\beta^*,X)}{\delta^A(X)} \cdot \frac{\PPst{A=1}{X}}{\PP{A=1}}},
\end{align*}
where $\delta^{\tilde{M},A}(\beta,X)$ is defined as~\eqref{eqn:nui_delta} with $\tilde{M}(Y,\beta) = \One{Y \leq \gamma^*+\beta}-q$.
Therefore,
\begin{align*}
    &0 = \EE{\frac{\delta^{\tilde{M},A}(\beta^*,X)}{\delta^A(X)} \cdot \PPst{A=1}{X}} = \EE{\frac{\rho(X)}{\delta^A(X)}\cdot \sum_{z=0,1} \EEst{(\One{Y \leq \gamma^* + \beta}-q) (1-A)}{Z=z,X} (2z-1)}\\
    &= \EE{\frac{\rho(X)}{\delta^A(X)} \cdot \EEst{\EEst{(\One{Y \leq \gamma^* + \beta}-q) (1-A)}{Z,X} \cdot \frac{2Z-1}{\pi_Z(X)}}{X}}\\
    &= \EE{(\One{Y \leq \gamma^* + \beta}-q) (1-A) \cdot \frac{2Z-1}{\pi_Z(X)} \cdot \frac{\rho(X)}{\delta^A(X)}}\\
    &= \EEst{\left(\One{Y \leq \gamma^*+\beta^*} - q\right) \cdot \left(\frac{2Z-1}{\pi_Z(X)}\cdot\frac{\rho(X)}{\delta^A(X)}\right)}{A=0} \cdot \PP{A=0},
\end{align*}
and thus the equality in~\eqref{eqn:qte_eq} follows as desired.

\end{document}